\declaretheoremstyle[
qed=$\diamond$
]{mystyle}
\declaretheorem[numberwithin=section,qed=$\diamond$]{theorem}
\declaretheorem[unnumbered,name=Theorem,qed=$\diamond$]{theorem*}
\declaretheorem[unnumbered,name=Lemma,qed=$\diamond$]{lemma*}
\declaretheorem[unnumbered,name=Claim,qed=$\diamond$]{claim*}
\declaretheorem[unnumbered,name=Proposition,qed=$\diamond$]{proposition*}
\declaretheorem[unnumbered,name=Corollary,qed=$\diamond$]{corollary*}
\declaretheorem[numberwithin=section,qed=$\diamond$]{problem}
\declaretheorem[unnumbered,name=Problem,qed=$\diamond$]{problem*}
\declaretheorem[sharenumber=theorem,qed=$\diamond$]{fact}
\declaretheorem[unnumbered,name=Fact,qed=$\diamond$]{fact*}
\declaretheorem[unnumbered,name=Question,qed=$\diamond$]{question*}
\theoremstyle{definition}
\declaretheorem[numberwithin=section,qed=$\diamond$]{definition}
\declaretheorem[unnumbered,name=Definition,qed=$\diamond$]{definition*}
\newtheorem*{remark*}{Remark}
\newtheorem*{example*}{Example}
\renewcommand{\vec}[1]{\mathbf{#1}} 
\let \oldv \v
\renewcommand{\v}[1]{\vec{#1}} 
\newcommand{\E}{\mathop{\mathbb{E}}} 
\newcommand{\RR}{\mathbb{R}}
\newcommand{\R}[1]{\mathbb{R}^{#1}} 
\newcommand{\Rnonneg}{\R{}_{\geq 0}} 
\newcommand{\Rpos}{\R{}_{> 0}} 
\newcommand{\defeq}{\vcentcolon=} 
\newcommand{\diag}{\mathbf{diag}} 
\newcommand{\poly}{\mathrm{poly}} 
\newcommand{\inprod}[2]{\left\langle #1,#2 \right\rangle} 
\newcommand{\erdosrenyi}{Erd\H{o}s-R\'enyi\ } 
\newcommand{\half}{\frac{1}{2}} 
\newcommand{\union}{\cup} 
\newcommand{\djunion}{\sqcup} 
\newcommand{\comment}[1]{\ignorespaces} 
\newcommand{\Mapsto}{\rightarrow}
\newcommand{\effres}[1][{}]{\mathrm{R}^{#1}_\mathrm{eff}}
\newcommand{\Tail}{\mathsf{Tail}}
\title{How to Realize a Graph on Random Points}
\author{Saad Quader \thanks{Corresponding author. Email: \texttt{saad.quader@uconn.edu}. Web page: \href{https://saadquader.wordpress.com}{https://saadquader.wordpress.com} } }
\author{Alexander Russell \thanks{Email: \texttt{acr@cse.uconn.edu}}}
\affil{University of Connecticut}
\begin{document}
\maketitle

\begin{abstract}
We are given an integer $d$, a graph $G=(V,E)$, and a uniformly random embedding   $f : V \rightarrow \{0,1\}^d$ of the vertices.  We are interested in the probability that $G$ can be ``realized'' by a scaled Euclidean norm on $\RR^d$, in the sense that there exists a non-negative scaling $w \in \RR^d$  and a real threshold $\theta > 0$ so that
\[
 (u,v) \in E \qquad \text{if and only if} \qquad \Vert f(u) - f(v) \Vert_w^2 < \theta\,,
\]
where $\| x \|_w^2 = \sum_i w_i x_i^2$.  

These constraints are similar to those found in the Euclidean minimum spanning tree (EMST) realization problem. 
A crucial difference is that the realization map is (partially) determined by the random variable $f$. 

In this paper, we consider embeddings $f : V \rightarrow \{ x, y\}^d$ for arbitrary $x, y \in \RR$. 
We prove that arbitrary trees can be realized with high probability when $d = \Omega(n \log n)$. 
We prove an analogous result for graphs parametrized by the arboricity: 
specifically, we show that an arbitrary graph $G$ with arboricity $a$ can be realized with high probability when $d = \Omega(n a^2 \log n)$. 
Additionally, if $r$ is the minimum effective resistance of the edges, $G$ can be realized with high probability when $d=\Omega\left((n/r^2)\log n\right)$. 
Next, we show that it is necessary to have $d \geq \binom{n}{2}/6$ to realize random graphs, or $d \geq n/2$ to realize random spanning trees of the complete graph. 
This is true even if we permit an arbitrary embedding $f : V \rightarrow \{ x, y\}^d$ for any $x, y \in \RR$ or negative weights. 
Along the way, we prove a probabilistic analog of Radon's theorem for convex sets in $\{0,1\}^d$.

Our tree-realization result can complement existing results on statistical inference for gene expression data which involves realizing a tree, such as \cite{lp_tree}. 
\end{abstract}

\section{Introduction}\label{sec:intro}

A 2015 study considered the following problem involving gene expression data \cite{lp_tree}. 
We are given a rooted tree $T=(V,E)$ on $n$ objects (cell types); the tree arose from some underlying biological process among these objects. 
We are also given a one-to-one map $f : V \rightarrow \RR^d$, giving a data point (feature vector) associated with each objects. 
Let us write $f(V) \defeq \{f(u) \}_{u \in V}$. 
The authors expected that the data points should ``explain'' the tree in the following sense: 
that there should exist non-negative weights $w=\mqty(w_1, \cdots, w_d)$ such that the geometric minimum spanning tree (GMST) of the points $f(V)$ under the \emph{weighted Euclidean norm} $\Vert x \Vert_w$ will be identical to $T$, where $\Vert x \Vert_w \defeq (\sum_{i=1}^d{w_i x_i^2})^{1/2}$. 
If this is true, we say that ``$w$ \emph{realizes} $T$ on $f$ with dimension $d$.''

\paragraph{The EMST realization problem.}
The above problem immediately brings to mind the \emph{Euclidean minimum spanning tree (EMST) realization} problem, an important problem in graph drawing and VLSI circuit design \cite{realization-np-hard,king-realization-d3}. 
It says: 
\emph{Given a tree $T=(V,E)$ and an integer $d \geq 1$, can we find a one-to-one map $h : V \rightarrow \RR^d$ such that the unique GMST on the points $\{h(u)\}_{u \in V}$ under the $\ell_2^d$ norm is identical to $T$?} 

Let $\hat{w}$ be a scaling of $\RR^d$, i.e., it is a linear operator on $\RR^d$ whose matrix representation $\hat{W}$ is diagonal. 
Suppose the EMST realization algorithm outputs a map $h \defeq \hat{w}f$ for some $f : V \rightarrow \RR^d$. 
We can show (see Section~\ref{sec:link-emst-gene}) that finding an $h = \hat{w}f$ is equivalent to finding a weighted Euclidean norm $\Vert \vdot \Vert_w$ consistent with the map $f$. 
Here, we treat $w$ as a linear operator with a matrix representation $W \defeq \diag(w_1, \cdots, w_d)$, and define $W \defeq \hat{W}^2$ so that the desired weights $w_i$ are non-negative.
We show in Section~\ref{sec:link-emst-gene} that any solution to the two problems above must satisfy the following constraints:
\begin{align}
&\text{For all }u,v \in V\,, \qquad \Big( (u,v) \in E \qquad \text{if and only if} \qquad \Vert \hat{w}(f(u) - f(v)) \Vert_2^2 < \theta \Big) \qquad \text{for some }\theta > 0
\, .
\label{eq:emst}
\end{align}
There is a crucial difference between these two problems. 
In the EMST realization problem, we have to use the $\ell_2^d$ norm but are free to optimize $f$. 
In the problem in \cite{lp_tree}, however, the map $f$ is fixed but we are free to optimize a weighted $\ell_2^d$ norm.

\paragraph{The motivation behind this work.}
The \cite{lp_tree} paper uses a linear program to find a feasible set of weights while keeping the number of non-zero weights as small as possible. 
Without computing the weights, we cannot tell a priori whether a realization exists for a particular dimension $d$. 
One could argue that if we knew the distribution of the points $f(V)$, we could have used more appropriate techniques.
However, the problem description does not specify a process for generating the map $f$. 

Our work addresses this gap by defining and analyzing a problem where $f$ has a certain distribution.

\paragraph{Our problem: Graph realization with a random $f$.}
\emph{What can we say when the map $f : V \rightarrow \RR^d$ is uniformly random in some subset of $\RR^d$?} 
This is an intriguing question from a theoretical perspective. The random map $f$ would not depend on the edges of $T$. 
Moreover, the constraints (\ref{eq:emst}) do not mandate any structure on the edges $E$; they do not require $T$ to be a tree. 
This allows us to think about realizing an undirected graph $G = (V,E)$ on a set of random points. 
This is the central problem studied in this paper. 
Now, however, we require that the realization happens with high probability in the randomness in $f$.

\begin{problem}[Graph Realization on Random Points]
\label{prob:embed}
We are given an undirected graph $G=(V,E)$, a positive integer $d$, and a uniformly random embedding $f : V \rightarrow \{x,y\}^d$ for two arbitrary reals $x$ and $y$. 
We wish to find a weighted Euclidean metric $\norm{x}_w \defeq \sqrt{\sum{w_i x_i^2}}$ given by non-negative weights $w=\mqty(w_1, \cdots, w_d)$ which satisfies the constraints (\ref{eq:emst}) with probability
\[
R(G,d) \defeq \Pr_{f}[\text{ $G$ can be realized under $f$} ]
\,.
\]
We say that with probability $R(G,d)$, $G$ is \emph{realized} by $w$ on the embedding $f$ with dimension $d$. 
If $R(G,d) = 1-o(1)$, we simply omit the probability part.
\end{problem}

It will be instructive to think of $f$ as a random map from $V$ to $\{0,1\}^d$ or $\{\pm 1\}^d$. 
We are interested in a realization which is at least partly determined by a given random embedding. 
This aspect sets this problem apart: 
as far as we know, such a characterization has not been studied in the realization literature. 
In addition, the EMST realization problem concerns only trees, as does the problem in \cite{lp_tree}. 
There are several notions of realizing graphs, but none in the sense described above. 
Refer to Section~\ref{sec:lit-review} for the connections to some known problems.

There are some natural questions. 
Is there an algorithm to realize arbitrary graphs? 
What is the time complexity? 
How does that algorithm depend on the target dimension? 
Which role does the structure (e.g., largest degree, diameter, edge density, tree-width, etc.) play? 
While it is conceivable that a large $d$ would ``make things easy'' by allowing more degrees of freedom, it is not obvious ``how large'' a $d$ is necessary, or sufficient. 
Are there graphs that are ``hard to realize'' in the sense that they force \emph{every algorithm} to require a large $d$?
We address these questions in this work.

\subsection{Our Contributions}

We analyze two algorithms for Problem~\ref{prob:embed}, one for realizing trees and the other for realizing graphs. Both algorithms use only zero-one weights although we are allowed to use any nonnegative weights. The analysis reveals that using larger weights would not impact the bound on the dimension. It would, however, impact the threshold $\theta$ in (\ref{eq:emst}).

\begin{theorem}[Realizing a tree, see Theorem~\ref{thm:tree-second}]\label{thm:tree-second-informal}
There exists an algorithm that realizes any tree on $n$ vertices with high probability if the target dimension is $\Omega(n \log n)$.
\end{theorem}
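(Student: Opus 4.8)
The plan is to exhibit an explicit randomized reduction of the tree-realization problem to a collection of independent pairwise events, and then show that $d = \Omega(n\log n)$ coordinates suffice for all of them to succeed simultaneously with high probability. First I would switch to the representation $f : V \to \{0,1\}^d$ (equivalently $\{\pm 1\}^d$), since by the scaling argument in Section~\ref{sec:link-emst-gene} the choice of $x,y$ does not matter. With zero-one weights $w \in \{0,1\}^d$ and $S = \supp(w)$, the quantity $\Vert f(u) - f(v) \Vert_w^2$ is exactly the number of coordinates $i \in S$ on which $f(u)$ and $f(v)$ disagree, i.e. the restricted Hamming distance $d_S(f(u),f(v))$. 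So the task becomes: choose $S \subseteq [d]$ and $\theta$ so that $d_S(f(u),f(v)) < \theta$ for every edge $(u,v) \in E$ and $d_S(f(u),f(v)) \ge \theta$ for every non-edge. I would aim for $\theta$ slightly below the typical coordinate count, so that edges must be realized by coordinates on which the two endpoints \emph{agree}.

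The key idea I would use is a ``private coordinates'' scheme driven by the tree structure. Root the tree and, for each non-root vertex $v$ with parent $p(v)$, the algorithm reserves a disjoint block $B_v$ of $m \defeq \Theta(\log n)$ coordinates; this costs $d = (n-1)m = \Theta(n\log n)$ coordinates total, which is the claimed bound. On block $B_v$ we want $f(v)$ and $f(p(v))$ to agree on \emph{all} coordinates of $B_v$ (making the edge $(v,p(v))$ ``cheap'' there) while on every other block $B_u$ the pair $f(v), f(p(v))$ disagrees on roughly half the coordinates; and for any non-adjacent pair $x,y$, across all blocks combined they disagree on close to $d/2$ coordinates. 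Since $f$ is uniformly random, the event that $f(v)$ and $f(p(v))$ agree on a fixed block of size $m$ has probability $2^{-m}$, which is not high — so the algorithm does \emph{not} get this for free. Instead, the realization map is only \emph{partially} random: the reader should recall that we are allowed to \emph{choose} $w$ (hence $S$) after seeing $f$. So the real statement is that with high probability over $f$, \emph{there exists} a good $S$. The cleanest way to arrange this: for each edge $(v,p(v))$, among its block $B_v$ of $m$ random coordinates, keep only those coordinates on which $f(v)$ and $f(p(v))$ happen to agree — call this random subset $A_v \subseteq B_v$ — and set $S = \bigcup_v A_v$. By a Chernoff bound, $|A_v| \ge m/2 - O(\sqrt{m\log n}) \ge m/3$ for all $v$ simultaneously with high probability when $m = \Omega(\log n)$.

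With $S = \bigcup_v A_v$ fixed, I would then verify the two families of constraints. For an edge $(v,p(v))$: on the coordinates $A_v$ the endpoints agree (contributing $0$), and on the remaining coordinates $S \setminus A_v = \bigcup_{u \ne v} A_u$ they disagree on an expected half, so $d_S(f(v),f(p(v))) \approx (|S| - |A_v|)/2 \le |S|/2 - m/6$. For a non-edge $(x,y)$: here $f(x)$ and $f(y)$ are independent of the conditioning that defined any $A_u$ in a way I would need to handle carefully — this is the main obstacle (see below) — but morally $d_S(f(x),f(y)) \approx |S|/2$. Choosing $\theta$ in the gap, say $\theta = |S|/2 - m/12$, separates the two cases, and a union bound over all $O(n^2)$ pairs, each failing with probability $n^{-\omega(1)}$ by Chernoff with $|S| = \Omega(n\log n)$ and deviation $\Omega(\log n)$, closes the argument.

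The step I expect to be the main obstacle is the conditioning issue: the set $S$ is itself a function of $f$ (via which coordinates in each $B_v$ turned out to be agreements), so the coordinates in $S$ are \emph{not} unconditionally uniform, and for a non-edge pair $(x,y)$ the coordinates in $A_v$ for $v \in \{x,y\}$ or $v$ a neighbor of $x$ or $y$ carry bias. I would handle this by noting that $A_v$ depends only on the pair $(f(v), f(p(v)))$ restricted to $B_v$; for a fixed non-edge $(x,y)$, the blocks $A_v$ with $v \notin \{x,y, \text{children of }x, \text{children of }y\}$ are independent of $(f(x),f(y))$, and since $|S| = \Omega(n\log n)$ while the ``bad'' blocks total only $O(\deg(x)+\deg(y)) \cdot m$ coordinates, the unbiased part of $S$ still has size $\Omega(n\log n) - O(n\log n)$... — here one must be a little more careful, and the robust fix is to use a slightly larger budget or to observe that even the biased coordinates only help (agreement coordinates can only \emph{decrease} $d_S$, which is the wrong direction only for edges, not non-edges, so for non-edges we get a clean one-sided bound, and for edges the bias is exactly what we engineered). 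Making this bookkeeping precise, rather than any single inequality, is the technical heart of the proof.
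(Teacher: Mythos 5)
Your construction has a fatal signal-to-noise problem that makes $d=\Theta(n\log n)$ unreachable by this route. In your scheme each selected coordinate ``helps'' exactly one edge: the support $S=\bigcup_v A_v$ has size $\Theta(n\log n)$, but for a fixed edge $(v,p(v))$ only the $|A_v|=\Theta(\log n)$ coordinates of its own block are forced to agree, so the gap between the expected value of $d_S$ on an edge and on a non-edge is only $\Theta(m)=\Theta(\log n)$. Meanwhile $d_S(f(x),f(y))$ for a non-edge is essentially a Binomial$(|S|,1/2)$ variable with standard deviation $\Theta(\sqrt{|S|})=\Theta(\sqrt{n\log n})$, which swamps the $\Theta(\log n)$ signal: a single non-edge falls below your threshold $\theta=|S|/2-m/12$ with probability close to $1/2$, so no threshold separates the two populations even for one pair, let alone after a union bound over $n^2$ pairs. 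Your final step is therefore wrong --- the Chernoff bound for a deviation of $\Omega(\log n)$ in a sum of $\Omega(n\log n)$ fair bits is $\exp\left(-\Theta(\log n/n)\right)=1-o(1)$, which is vacuous, not $n^{-\omega(1)}$. Balancing signal against noise in any one-coordinate-helps-one-edge scheme forces $m\gtrsim\sqrt{d\log n}$ with $d=(n-1)m$, i.e.\ $d=\Omega(n^{2}\log n)$; this is exactly the bound the paper attributes to its ``random sample strategy,'' of which your private-blocks construction is a derandomized variant. (The conditioning issue you flag as the main obstacle is real but secondary.)

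The paper gains the missing factor of $n$ by making every selected coordinate help \emph{all} edges at once (the ``census strategy''): coordinate $i$ is kept when the fraction $p_i$ of tree edges agreeing at $i$ lies in $[1/2+\alpha/\sqrt{n},\,3/4]$. The key structural fact (Claim~\ref{claim:inprod-tree-second-nonedge}) is that, conditioned on $p_i$, a vertex pair at tree distance $t$ agrees at coordinate $i$ with probability $\frac{1}{2}\left(1+(2p_i-1)^{t}\right)$, so edges ($t=1$) beat non-edges ($t\geq 2$) by $\Omega(1/\sqrt{n})$ \emph{per coordinate}. The total signal is then $\Theta(d/\sqrt{n})$ against noise $\Theta(\sqrt{d\log n})$, whence $d=\Omega(n\log n)$ suffices. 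To salvage your approach you would need each coordinate to carry correlation with roughly $\sqrt{n}$-many edges simultaneously, which is precisely what the census achieves and what disjoint per-edge blocks cannot.
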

\begin{theorem}[Realizing a graph, see Corollary~\ref{thm:graph}]\label{thm:graph-informal}
Let $G$ be an undirected graph on $n$ vertices, and let $a$ be the arboricity of $G$. There exists an algorithm that realizes $G$ with high probability if the target dimension is $\Omega(n a^2 \log n)$, which is at most $\Omega(n \vert E \vert \log n)$.
\end{theorem}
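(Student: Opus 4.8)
The plan is to derive this as a corollary of the tree realization result (Theorem~\ref{thm:tree-second}) using the definition of arboricity. By Nash--Williams, write $E = E_1 \sqcup \dots \sqcup E_a$ so that each $G_j \defeq (V, E_j)$ is a forest, and split the $d$ coordinates into $a$ consecutive blocks $C_1, \dots, C_a$ of size $d/a$ each. Since $f$ is uniform, each restriction $f|_{C_j} \colon V \to \{0,1\}^{d/a}$ is again a uniformly random embedding, so I would run the (forest version of the) algorithm of Theorem~\ref{thm:tree-second} independently inside each block; let $w_j$ be the zero-one weight vector it outputs, supported on $C_j$, and set $w \defeq w_1 + \dots + w_a$, with a global threshold $\theta$ fixed below. (If Theorem~\ref{thm:tree-second} is stated only for trees, lift it to forests by realizing each tree component of $G_j$ in its own sub-block of $C_j$; a forest on $n$ vertices still needs only an $\Omega(n\log n)$-size block.)

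What I would actually use is not the black-box guarantee but the quantitative separation behind it. Re-examining the proof of Theorem~\ref{thm:tree-second}, inside $C_j$ one obtains, with high probability simultaneously over all pairs, a baseline level $M_j$ (a function of $f|_{C_j}$, of order $d/a$) and a margin $\Delta_j = \Omega\!\big(\tfrac{d/a}{\sqrt n}\big)$ such that $\Vert f(u) - f(v)\Vert_{w_j}^2$ lies within $O(\sqrt{(d/a)\log n})$ of $M_j - \Delta_j$ when $(u,v) \in E_j$, and within $O(\sqrt{(d/a)\log n})$ of $M_j$ when $(u,v) \notin E_j$. The second case is the important one: it covers every edge of $G$ that lies in a \emph{different} forest, which block $C_j$ simply treats as an ordinary non-$E_j$ pair.

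Summing over the $a$ blocks, which use disjoint coordinates and hence independent randomness, $\Vert f(u)-f(v)\Vert_w^2 = \sum_j \Vert f(u)-f(v)\Vert_{w_j}^2$ concentrates within $O(\sqrt{d\log n})$ of $\sum_j M_j$ for a non-edge of $G$, and within the same radius of $\sum_j M_j - \Delta_{j_0}$ for an edge whose forest index is $j_0$. Thus $\theta \defeq \sum_j M_j - \tfrac12 \min_j \Delta_j$ separates all edges from all non-edges, provided $\min_j \Delta_j \gg \sqrt{d\log n}$. Since $\Delta_j = \Omega(\tfrac{d/a}{\sqrt n})$, this is exactly $d/a = \Omega(a n \log n)$, i.e.\ $d = \Omega(n a^2 \log n)$. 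Finally, for an arbitrary codomain $\{x,y\}^d$ one rescales every weight by $(x-y)^{-2}$, which changes $\theta$ but not $d$; and $n a^2 = O(n |E|)$ because the densest subgraph certifying arboricity $a$ already carries $\Omega(a^2)$ edges.

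The step I expect to be the main obstacle is the interaction between the \emph{data-dependent} weights and the concentration claims. Because $w_j$ is a function of $f|_{C_j}$, the assertion ``$\Vert f(u)-f(v)\Vert_{w_j}^2 \approx M_j$ for non-$E_j$ pairs'' has to be made after conditioning on the coordinate-selection event, and one must check that, conditioned on which coordinates of $C_j$ survive, the surviving bits of a non-$E_j$ pair still behave like (nearly) independent and (nearly) unbiased $\{0,1\}$ summands, with the residual bias --- of order $\Vert w_j\Vert_1 / n$ --- dominated by $\Delta_j$. This two-sided sharpening of the tree-realization analysis is the real content; once it is in place, combining the $a$ independent blocks is routine, and it is precisely the clash between $a$ accumulated block fluctuations and a single block's margin that accounts for the extra factor of $a$ (hence $a^2$ rather than $a$).
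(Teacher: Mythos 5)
Your proof is essentially correct, but it takes a genuinely different route from the paper. The paper's Algorithm~\ref{alg:graph} keeps all $d$ coordinates and instead randomizes over the forest: it samples one acyclic subgraph $A$ from a covering family $\mathcal{A}$ and runs RealizeTree$(A,f)$, so that in the analysis of Claim~\ref{claim:graph} the per-coordinate gap of an edge $e$ is diluted by the factor $r(e)=\Pr_A[e\in A]$, giving $\delta_i=\Omega(r/\sqrt n)$ and hence $d=\Omega((n/r^2)\log n)$; the arboricity bound is the special case $r=1/a$ for an edge-disjoint forest decomposition. You instead fix the Nash--Williams decomposition, dedicate a disjoint coordinate block of size $d/a$ to each forest, and let every edge collect its full margin $\Omega\bigl((d/a)/\sqrt n\bigr)$ from exactly one block while paying the fluctuation $O(\sqrt{d\log n})$ of all $a$ blocks --- the same $d=\Omega(na^2\log n)$ arithmetic, reached by a deterministic partition of coordinates rather than a random partition of edges. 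Your version makes the origin of the $a^2$ more transparent and gives clean independence across blocks (disjoint coordinates), whereas the paper's version is strictly more general: because $\mathcal{A}$ need not be edge-disjoint, it also yields the effective-resistance bound $d=\Omega\bigl((n/r^2)\log n\bigr)$ with $\mathcal{A}$ the set of all spanning trees, which your block construction cannot reproduce. Two small remarks: the ``data-dependent weights'' obstacle you flag is exactly what the paper's Claim~\ref{claim:inprod-tree-second-nonedge} resolves --- the conditioning on $w_i=1$ is conditioning on the census outcome $p_i$, under which $\PrAgree(u,v,i)=\tfrac12(1+(2p_i-1)^t)$, so all non-$E_j$ pairs share a common baseline up to an $O(1/n)$ per-coordinate correction dominated by the margin, as you anticipated; and the extension from trees to forests is immediate (pairs in distinct components have $t=\infty$, hence $\PrAgree=1/2$, which only widens the gap), so your per-component sub-blocks are unnecessary.
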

\begin{theorem}[Hard instances, see Corollary~\ref{thm:lowerbound} and Theorem~\ref{thm:lowerbound-tree}]\label{thm:lowerbound-informal}
It is impossible to realize an \erdosrenyi random graph on $n$ vertices if the target dimension is less than $\binom{n}{2}/6$.
It is impossible to realize a random spanning tree of a complete graph on $n$ vertices if the target dimension is less than $n/2$.
\end{theorem}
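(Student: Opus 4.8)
The plan is to recast ``$G$ is realizable under $f$'' as ``the edge set of $G$ is an open halfspace cut of a fixed set of $\binom{n}{2}$ points'', and then to observe that — whatever $f$ is — there are far fewer halfspace cuts than there are graphs (respectively spanning trees) on $n$ vertices, so a random graph (respectively random spanning tree) is almost never realizable. First I would fix $f$ and simplify~(\ref{eq:emst}). Since every coordinate of $f(u)-f(v)$ is $0$ or $\pm(x-y)$, we have $\norm{f(u)-f(v)}_w^2=(x-y)^2\inprod{w}{z_{uv}}$, where $z_{uv}\in\{0,1\}^d$ is the indicator of the coordinates on which $f(u)$ and $f(v)$ differ (so $z_{uv}=f(u)\oplus f(v)$ when $f$ maps into $\{0,1\}^d$). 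Hence, as long as $x\neq y$ (the case $x=y$ being trivial), constraint~(\ref{eq:emst}) is equivalent to the existence of $w\in\RR^d$ and $\theta'\in\RR$ with $(u,v)\in E\iff\inprod{w}{z_{uv}}<\theta'$; that is, $G$ is realizable under $f$ exactly when $E$, viewed as a subset of the $N\defeq\binom{n}{2}$ points $\{z_{uv}\}$, is an open affine-halfspace cut in $\RR^d$. This reformulation uses neither the nonnegativity of $w$ nor the values of $x,y$, so it covers the full generality (negative weights, arbitrary $x,y$) claimed in the statement.

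Next I would count, for \emph{every} fixed $f$, the number of realizable edge sets. Affine halfspaces in $\RR^d$ have VC dimension $d+1$ (this is Radon's theorem: no $d+2$ points in $\RR^d$ are shattered), so by Sauer--Shelah the number of distinct subsets of the $N$ points arising as $\{\,uv:\inprod{w}{z_{uv}}<\theta'\,\}$ is at most $\sum_{i=0}^{d+1}\binom{N}{i}\le(N+1)\binom{N}{d+1}$, a bound uniform in $f$ and in the positions of the points; averaging this worst-case count over the random $f$ is the ``probabilistic analog of Radon's theorem'' alluded to in the abstract. It gives
\[
\Pr_{G,f}\big[\,G\text{ realizable}\,\big]\ \le\ \frac{(N+1)\binom{N}{d+1}}{2^{\binom{n}{2}}},\qquad\qquad \Pr_{T,f}\big[\,T\text{ realizable}\,\big]\ \le\ \frac{(N+1)\binom{N}{d+1}}{n^{n-2}},
\]
the denominators being the number of labelled graphs on $n$ vertices and, by Cayley's formula, the number of labelled spanning trees of $K_n$. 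It then remains to plug in the thresholds. If $d<\binom{n}{2}/6$ then $d+1\le(\tfrac16+o(1))N$, so $(N+1)\binom{N}{d+1}\le 2^{(H(\tfrac16)+o(1))N}$ with $H$ the binary entropy function and $H(\tfrac16)<1$; dividing by $2^{\binom{n}{2}}=2^N$ leaves $2^{-\Omega(N)}$, so $\Pr_{G,f}[G\text{ realizable}]\to0$ exponentially, equivalently $\E_G R(G,d)=o(1)$, and hence (by Markov) for all but an $o(1)$ fraction of graphs $G$ one has $R(G,d)=o(1)$ — these are the hard instances. If $d<n/2$ then Stirling gives $\binom{N}{d+1}\le\big(eN/(d+1)\big)^{d+1}\le\big(e(n-1)\big)^{n/2}=n^{(\tfrac12+o(1))n}$, and dividing by $n^{n-2}$ again leaves $n^{-\Omega(n)}\to0$, so a uniformly random spanning tree of $K_n$ cannot be realized with dimension below $n/2$.

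The step I expect to require the most care is the reduction in the first paragraph: making it airtight that the strict inequality in~(\ref{eq:emst}), the degenerate embedding $x=y$, and negative weights all fit the ``open affine-halfspace cut'' picture (in particular that every realizable $E$ is realized in the interior of a cell of the dual hyperplane arrangement, so the Sauer--Shelah count applies verbatim); and, in the last paragraph, choosing the constants $1/6$ and $1/2$ with enough slack that the entropy and Stirling estimates dominate the leftover polynomial factors while still leaving an honest $o(1)$. Once the halfspace reformulation is in hand, the remainder is routine VC-dimension bookkeeping.
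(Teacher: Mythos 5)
Your argument is correct and reaches the stated conclusions, and at the top level it mirrors the paper: recast realizability of $G$ under a fixed $f$ as cutting the $N=\binom{n}{2}$ difference points by a halfspace, bound the number of halfspace-separable dichotomies of $N$ points in $\RR^d$, and compare that count against $2^N$ labelled graphs (resp.\ the $n^{n-2}$ spanning trees from Cayley's formula), with the same entropy/Stirling estimates at the thresholds $N/6$ and $n/2$ (your numerics check out). The genuine difference is the key counting lemma. The paper proves its count from scratch: Proposition~\ref{prop:sep-imply-press} shows, via an argument that the feasibility polytope of the separation LP contains no line and hence has a vertex, that every separable dichotomy is witnessed by a hyperplane through exactly $d$ of the points, and Proposition~\ref{prop:no-pressing} then bounds the number of separable labelings by $2^{d+1}\binom{M}{d}$. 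You instead quote the VC dimension $d+1$ of affine halfspaces (Radon) and Sauer--Shelah to get $\sum_{i=0}^{d+1}\binom{N}{i}$; the two bounds are of the same order, your route is shorter and sidesteps the polytope-vertex argument entirely, while the paper's route is self-contained and is precisely what it packages as the ``probabilistic Radon's theorem.'' Your worry about strict inequalities and degenerate positions is moot on your route: Sauer--Shelah bounds the number of traces of the open-halfspace family regardless of point positions, so no interior-of-cell argument is needed. One remark on quantifiers: your Fubini/Markov step delivers $R(G,d)=o(1)$ for a $1-o(1)$ fraction of graphs, whereas Corollary~\ref{thm:lowerbound} asserts $R(G,d)=0$, i.e.\ non-realizability under \emph{every} $f$ simultaneously for a typical $G$. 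Since your count is uniform in $f$, you do prove, for each fixed $f$, that a $1-o(1)$ fraction of graphs is non-realizable under that $f$ --- which is exactly what Proposition~\ref{prop:no-pressing} gives; the swap to ``for all $f$ simultaneously'' is not argued in the paper either, so you lose nothing relative to its actual proof, and the informal statement is established either way.
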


Theorem~\ref{thm:tree-second-informal} holds for weighted $\ell_1$ metric as well. The constant hidden under the $\Omega$ notation is $864$. 
The \emph{arboricity} of $G$ (Definition~\ref{def:arboricity}) can be interpreted as a measure of how sparse $G$ is. 
Since the arboricity of a tree is $1$, the bound on $d$ in Theorem~\ref{thm:graph-informal} implies the bound in Theorem~\ref{thm:tree-second-informal}. 
We also explore a connection between the bound on $d$ and the effective resistance of the edges of $G$. 
Theorem~\ref{thm:lowerbound-informal} complements Theorem~\ref{thm:graph-informal} by giving a lower/necessary bound on $d$. 
The statement holds even if $f:V \rightarrow \{x,y\}^d, x,y \in \RR$ is not random or negative weights are allowed.

\paragraph{An application.}
Theorem~\ref{thm:tree-second-informal} can serve as a sanity-check for experiments where such realizations are taken as an evidence that the graph and the points ``explain'' each other. 
For example, in \cite{lp_tree}, the authors asked whether a set of $n = 38$ data points living in $d = 22,215$ dimensions can explain a given tree $T$ on $n$ vertices. 
According to Theorem~\ref{thm:tree-second-informal}, we can realize an arbitrary $38$-vertex tree using $d^\prime$ dimensions on a random point-set where $d^\prime = 864 n\log n \approx 119,429$. 
Since $d \ll d^\prime$, one can argue that the realization---i.e., the inference---achieved in \cite{lp_tree} was ``not a fluke.'' 
Such an argument, however, is contingent on the tightest \emph{known} bound.


\subsection{Relating the \cite{lp_tree} Problem to EMST Realization}\label{sec:link-emst-gene} 
Let us write the EMST realization map $h = \hat{w} f$ where $f : V \rightarrow \RR^d$ is one-to-one and $\hat{w}$ is a non-negative scaling of $\RR^d$. The matrix representations of these maps are $H,\hat{W}$, and $F$, respectively, with $H = \hat{W}F$. Here, every vertex $u \in V$ is identified with a standard basis vector $\vec{u}$ of $\RR^n$. Moreover, $\hat{W}$ is a $d\times d$ diagonal matrix with non-negative entries. Suppose the constraints (\ref{eq:emst}) hold. 

First, we claim that $\Vert \hat{w}(f(u) - f(v)) \Vert_2 = \Vert f(u) - f(v) \Vert_w$ where $w = \mqty(w_1,\cdots, w_d)$. This can be seen by writing $W \defeq \hat{W}^2 = \diag(w_1, \cdots, w_d)$ and observing that 
\[
\Vert \hat{w}(f(u) - f(v)) \Vert_2^2 
= \Vert \hat{W}F(\vec{u} - \vec{v}) \Vert_2^2 
= (\vec{u} - \vec{v})F^T W F (\vec{u} - \vec{v}) 
= \sum_{i=1}^d{w_i (f(u)_i - f(v)_i)^2 } 
= \Vert f(u)-f(v) \Vert_w^2
\, .
\]

Thus the constraints (\ref{eq:emst}) imply that for every $(u,v) \in E$ and every $(u^\prime, v^\prime) \in \overline{E}$, the distance between $f(u)$ and $f(v)$---under the norm $\Vert \cdot \Vert_w$---is shorter than the distance between $f(u^\prime)$ and $f(v^\prime)$. 
It follows that the unique GMST on the points $h(V)$ under the $\Vert \vdot \Vert_2$ norm will be identical to $T$, as will be the unique GMST $T^\prime_w=(f(V), E^\prime)$ on the points $f(V)$ under the $\Vert \vdot \Vert_w$ norm. 
Here, by ``identical,'' we mean $(u,v) \in E$ if and only if $(f(u), f(v)) \in E^\prime$, and by ``unique,'' we mean $T^\prime_w$ will have the lowest total edge-length among all spanning trees on $f(V)$. 
Consequently, finding an EMST realization map $h$ is equivalent to finding a weighted Euclidean norm $\Vert \vdot \Vert_w$ consistent with the map $f$. 

\subsection{Realizing Trees}\label{sec:intro-tree}
We show in Section~\ref{sec:tree} that it suffices for the tree-realization algorithm, Algorithm~\ref{alg:tree-second}, if the entries of the random matrix $F$ come from any fixed two-element set $\{x,y\}$ where $x,y \in \RR$. However, to make the present exposition clear, let us assume that $F \in \{\pm 1\}^{d \times n}$ so that every vertex $u$ is mapped to a random point in $\{\pm 1\}^d$.

\paragraph{Lengths to inner products.} Recall that $\Vert \hat{W}F(\vec{u}-\vec{v}) \Vert_2^2 = \Vert \hat{W}F\vec{u}\Vert_2^2 + \Vert \hat{W}F\vec{v}\Vert_2^2 - 2 \langle F\vec{u} \vert \hat{W}^T\hat{W} \vert F\vec{v} \rangle$ where $\langle x, y \rangle$ denotes the usual inner product $x^Ty$ and $\langle x \vert A \vert y \rangle \defeq \langle x, A y \rangle$ for any matrix $A$. Since the length of every vector in $\{\pm 1\}^d$ is the same, the constraints (\ref{eq:emst}) is equivalent to saying that the weighted inner product between two vectors $F\vec{u}$ and $F\vec{v}$ using the weights $W = \hat{W}^T\hat{W}$ must be ``large'' if $(u,v) \in E$, and ``small'' if $(u,v) \in \overline{E}$. 

Now we can focus on the inner products instead of lengths. The entries in the random vector $F \vec{u}$ are independent and identically distributed Bernoulli random variables. Hence we can independently select a weight $w_i$ that is ``best'' for the coordinate $i$. The precise sense of ``best'' will be discussed in a moment. 

The inner product $\langle F \vec{u} \vert W \vert F \vec{v} \rangle$ is the sum of individual contributions $c_i$ from each coordinate $i$. Fix a coordinate $i$ and two vertices $u,v \in V$. The heart of the analysis is to show that in expectation, $c_i$ is ``large'' if and only if $(u,v)\in E$. Because each coordinate is independent, we can apply a Chernoff bound to show that the sum of these contributions -- i.e., the inner product $\langle F \vec{u} \vert W \vert F \vec{v} \rangle$ -- will be ``large'' if and only if $(u,v)\in E$.

\paragraph{Random sample strategy.} How do we select the weight $w_i$ for coordinate $i$? One way to do it is to pick $w_i$ so as to ``help'' a random tree-edge $(u,v)$ as follows: 1.) select an edge $(u,v) \in E$ uniformly at random, then 2.) set $w_i=1$ if $F\vec{u}_i = F\vec{v}_i$, and set $w_i=0$ otherwise. The rationale behind this ``random sample strategy'' is that this will make $c_i$ for this edge as large as possible (which is $1$). At the same time, it will randomize $c_i$ for all other vertex-pairs. Since every coordinate is pushing a tree-edge to the ``right direction,'' we hope that we can satisfy the constraints (\ref{eq:emst}) if we have a sufficiently large number of coordinates. Although the above idea works, the bound we get on $d$ is $\Omega(n^2 \log n)$ instead of the $\Omega(n\log n)$ bound which was promised by Theorem~\ref{thm:tree-second-informal}. (We omit the details.) 

\paragraph{Census strategy.} How do we improve the above strategy? Here is an idea: let us try to ``help'' multiple tree-edges at once. In particular, we set $w_i=1$ if a ``significant'' fraction of the tree-edges $(u,v)$ satisfy $F\vec{u}_i = F\vec{v}_i$; otherwise, we set $w_i=0$. This ``census strategy'' is detailed in Algorithm~\ref{alg:tree-second}, whose analysis leads to the desired bound of $d=\Omega(n\log n)$. This bound is only a $\log n$ factor away from the linear lower bound implied by Theorem~\ref{thm:lowerbound-tree}.

\subsection{Realizing Graphs}\label{sec:intro-graph}
When realizing a graph with cycles, the edges on a cycle are dependent in a non-trivial way. 
The census strategy ``touches'' multiple edges, and if two of them are on a cycle, a crucial argument breaks down in the proof of Claim~\ref{claim:inprod-tree-second-nonedge}. 
Not all hopes are lost, though, because the random sample strategy still works since it looks at only one edge at a time. However, it leads to a weak $\Omega(n^4\log n)$ bound on $d$. (Details omitted.)

We take the next natural choice: look at a family $\mathcal{A}$ of acyclic subgraphs of $G$ and invoke Algorithm~\ref{alg:tree-second} on a random member $A$ from this family. This works, and the bound we get depends on the probability that a given edge is contained in the sampled subgraph $A$. This is why $\mathcal{A}$ must cover every edge of $G$.

The best result comes when every edge belongs to exactly one member of $\mathcal{A}$. By necessity, $\mathcal{A}$ has to be a collection of edge-disjoint forests. This gives rise to the bound in Theorem~\ref{thm:graph-informal} containing the arboricity of $G$. If we take $\mathcal{A}$ to be the set of all spanning trees of $G$, the bound on $d$ is proportional to $1/r^2$ where $r$ is the smallest effective resistance among all the edges.

A simple tweak in Algorithm~\ref{alg:tree-second} allows us to realize the complement of any tree; this, in turn, allows us to realize any graph $G$ with $d=\Omega(na^2\log n)$ where $a=\min\{a(G), a(\overline{G})\}$ and $\overline{G}$ is the complement of $G$.

\subsection{Impossibility Results via a Geometric Interpretation}\label{sec:intro-impossibility}
The graph realization problem can be reduced to a hyperplane separation problem. Informally speaking, every constraint in (\ref{eq:emst}) specifies that a quantity of the form $\sum{w_i (F\vec{u} - F\vec{v})_i^2}$ be ``small'' if and only if $(u,v) \in E$. Observe that this quantity is the inner product of the vector $w=\mqty(w_1, \cdots, w_d)$ with the vector $g(u,v):=\big( (F\vec{u} - F\vec{v})_i^2 \big)_{i\in [d]}$. $G$ is realizable if there is a threshold $\theta$ and a vector $w$ satisfying $\langle w, g(u,v) \rangle > \theta$ if and only if $(u,v) \in E$.

The graph $G$ naturally colors the elements in $g(V \times V)$ as follows: an element is red if $(u,v) \in E$, and blue otherwise. 
The original EMST realization problem is equivalent to the following. 
First coloring the elements of $V\times V$ as red (edges) or blue (non-edges) according to $G$. 
Then we fix a hyperplane $h_w$ with its normal vector $w$ to the all-ones vector. 
Finally, we move the points around (via choosing an embedding $f$) so that points of different colors are on different sides of the hyperplane. 
In contrast, in Problem~\ref{prob:embed}, we first select ${n \choose 2}$ points from $\{0,s^2\}^d$ according to the random map $f$. 
Next, we color these in red or blue according to $G$. 
Finally, we find a hyperplane $h_w$ so that it perfectly separates the red point-set from the blue point-set.

Consider the two convex hulls pertaining to the red points and the blue points. 
If they intersect, no hyperplane could possibly realize $G$. 
Armed with this observation, we ask: \emph{If we use a random two-coloring, how likely is the event that a separating hyperplane would exist?} The examination in Section~\ref{sec:lowerbound} culminates in Corollary~\ref{thm:lowerbound} which states that the probability is $o(1)$. Consequently, an \erdosrenyi random graph (respectively, a random spanning tree) on $V$ cannot be realized if the target dimension $d$ is sub-quadratic (respectively, sub-linear) in $n$.

\subsection{A Probabilistic Analog of Radon's Theorem for $\{0,1\}^d$}
Radon's theorem (Theorem~\ref{thm:radon}) in convex geometry states that for every point-set $B$ of size $d+2$ in $\RR^d$, \emph{there exists} a red/blue coloring of the points so that the red convex hull intersects the blue convex hull. 
However, it does not give the probability that a \emph{random} red/blue coloring would result in the intersecting convex hulls. 
We ask the following: Suppose $B \subset \{0,1\}^d$ and that the red/blue coloring is uniformly random. 
How large does the set $B$ have to be so that with high probability, the two convex hulls intersect? 
We believe that this question---as well as the answer below---is interesting in its own right.

\begin{theorem}[Informal, see Theorem~\ref{thm:nonseparable}]
With high probability, a uniformly random red/blue coloring of a point-set containing at least $6d$ points in $\{0,1\}^d$ is not separable by any hyperplane in $\RR^d$ if $\vert B \vert \geq 6d$.
\end{theorem}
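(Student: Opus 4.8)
The plan is to discard all special structure of $\{0,1\}^d$ and argue by a pure counting / VC-dimension bound: I will show that only an exponentially small fraction of the $2^{\Size{B}}$ two-colorings of $B$ is separable, so a uniformly random coloring is non-separable with probability $1-o(1)$. \textbf{Step 1 (reduction).} By the separating-hyperplane theorem, a red/blue coloring of $B$ is realized by a separating hyperplane exactly when $\mathrm{conv}(\text{red})$ and $\mathrm{conv}(\text{blue})$ are disjoint, and two disjoint \emph{compact} convex sets are strictly separated by a hyperplane. Thus ``$B$ is not separable'' means precisely ``the two color hulls intersect'', and it suffices to bound the number of colorings of the fixed point-set $B$ whose color hulls are disjoint.

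\textbf{Step 2 (counting separable colorings).} If a coloring is separable then its red class has the form $B\cap H$ for an open halfspace $H$, so the number of separable colorings is at most the number of distinct subsets of $B$ that a halfspace can cut out. Halfspaces in $\RR^d$ form a set system of VC dimension $d+1$ (a $d$-simplex is shattered; Radon's theorem shows no $d+2$ points can be), so by the Sauer--Shelah lemma this number is at most $\sum_{i=0}^{d+1}\binom{\Size{B}}{i}$. Hence, with $\Pr$ denoting probability over the uniform random coloring of $B$,
\[
\Pr[\,B\text{ is separable by a hyperplane in }\RR^d\,]\ \le\ \frac{1}{2^{\Size{B}}}\sum_{i=0}^{d+1}\binom{\Size{B}}{i}\,.
\]
(One could instead invoke Cover's function-counting theorem, which gives the sharp count $2\sum_{i=0}^{d}\binom{\Size{B}-1}{i}$; but that formula is stated for points in general position, and extending it to an arbitrary $B$—which is all we may assume—would need an extra perturbation argument, whereas the Sauer--Shelah bound applies verbatim to any point-set.)

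\textbf{Step 3 (the estimate).} Since $\Size{B}\ge 6d$ we have $(d+1)/\Size{B}\le\tfrac13$ for every $d\ge 1$, comfortably below $\tfrac12$. The standard bound $\sum_{i\le k}\binom{m}{i}\le 2^{m\,H(k/m)}$ for $k\le m/2$ ($H$ the binary entropy), together with monotonicity of $H$ on $[0,\tfrac12]$, then bounds the sum by $2^{c\,\Size{B}}$ with $c\defeq H\!\big((d+1)/\Size{B}\big)\le H(\tfrac13)\approx 0.92<1$ (and in fact $c\to H(\tfrac16)\approx 0.65$ as $d\to\infty$). Therefore the probability above is at most $2^{-(1-c)\Size{B}}\le 2^{-(1-c)\cdot 6d}=2^{-\Omega(d)}=o(1)$. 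This is exactly where the hypothesis $\Size{B}\ge 6d$ enters: all that is needed is that $(d+1)/\Size{B}$ stay bounded away from $\tfrac12$, so any constant strictly greater than $2$ would do in place of $6$ (because $H(\alpha)<1$ for every $\alpha\ne\tfrac12$); the value $6$ is simply a safe round choice.

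\textbf{The main obstacle.} Candidly there is no deep obstacle: the statement is a short consequence of the separating-hyperplane theorem plus a Sauer--Shelah / entropy count, and the only points requiring genuine care are bookkeeping—counting \emph{ordered} colorings so as to match the $2^{\Size{B}}$ denominator, and using a halfspace-shatter bound valid for arbitrary (non-general-position) point-sets rather than the cleaner general-position formula. I would also note that the argument actually proves the stronger fact that \emph{any} $6d$ points in $\RR^d$—not just points of $\{0,1\}^d$—have this property, with failure probability $2^{-\Omega(d)}$; the restriction to $\{0,1\}^d$ in the theorem only reflects its intended use in the impossibility results of Section~\ref{sec:lowerbound}.
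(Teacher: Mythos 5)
Your proof is correct, but it reaches the conclusion by a different route than the paper. The paper's argument (Propositions~\ref{prop:no-pressing} and \ref{prop:sep-imply-press}) first shows, via a linear-programming vertex argument, that any separating hyperplane can be replaced by one \emph{supported on exactly $d$ points} of $B$, and then enumerates such hyperplanes by hand: there are at most $\binom{M}{d}$ of them, each separating at most $2^{d+1}$ labelings, giving $2^{d+1}\binom{M}{d}$ separable colorings out of $2^M$. You instead observe that a separable coloring is determined by its red class, which is $B\cap H$ for a halfspace $H$, and invoke the VC dimension $d+1$ of halfspaces together with Sauer--Shelah to bound the number of such classes by $\sum_{i\le d+1}\binom{M}{i}$; the entropy estimate then finishes as in the paper. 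These are morally the same count --- the paper's two propositions are essentially a from-scratch proof of the halfspace growth-function bound (in the spirit of Cover's theorem), with Proposition~\ref{prop:sep-imply-press} playing the role that Sauer--Shelah plays for you --- but your version is shorter, sidesteps the general-position and $\dim(\mathrm{span}(B))\ge 2$ caveats that the paper's LP argument has to handle, and gives a marginally tighter count ($\approx\binom{M}{d+1}$ versus $2^{d+1}\binom{M}{d}$). Your closing remarks are also accurate: the hypercube structure of $B$ is never used (the paper's Proposition~\ref{prop:no-pressing} likewise takes $B$ arbitrary), and any ratio $\Size{B}/d$ bounded away from $2$ would do. One small caution: your final simplification to $2^{-\Omega(d)}$ is weaker than your intermediate bound $2^{-(1-c)\Size{B}}$; in the application of Section~\ref{sec:lowerbound} one wants $o(1)$ as $n\to\infty$ with $\Size{B}=\binom{n}{2}$ even when $d$ is small, so you should quote the bound that is exponentially small in $\Size{B}$ rather than in $d$.
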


The proof of Theorem~\ref{thm:nonseparable} relies on counting the number of hyperplanes in $\RR^d$ that are ``pressed against'' exactly $d$ points in $\{0,1\}^d$. Every separating hyperplane implies a ``pressing'' hyperplane (Proposition~\ref{prop:sep-imply-press}). If there is no pressing hyperplane---which happens with high probability (Proposition~\ref{prop:no-pressing})---there can be no separating hyperplane.

\subsection{Related Problems}\label{sec:lit-review}
\paragraph{EMST realization.} 
Two factors play a key role  in determining whether an EMST is realizable: the largest degree $\Delta$ in the tree and the target dimensionality $d$. For $d=2$, solving the EMST realization problem is always possible if $\Delta \leq 5$, impossible if $\Delta \geq 7$, but the corresponding decision problem is NP-Hard if $\Delta = 6$ \cite{realization-np-hard}. The landscape for $d=3$ is also fragmented with results conditioned on the structure of the tree and the dimensions of the target space \cite{king-realization-d3}.

The EMST realization problem can be thought of as the inverse of the \emph{Euclidean Steiner Tree Problem}, which asks the following: given $n$ points in $\RR^d$, find a tree with the shortest total edge length.

\paragraph{Euclidean distance matrix realization.} 
Suppose we are given a matrix $\hat{D}$ containing the ``desired'' pairwise distances for a set of vertices $V$. 
To realize $\hat{D}$ in $\RR^d$, we have to map the vertices in $\RR^d$ such that the pairwise Euclidean distances among the mapped vertices equal the prescribed value in the distance matrix \cite{liberti2013realization}. 
In \cite{hendrickson-unique}, Hendrickson studied the conditions under which a graph has a unique realization in this sense. 
Although the EMST realization problem can be seen as a thresholded version of this distance matrix realization problem---the adjacency matrix of $T$ plays the role of the distance matrix $\hat{D}$---the adjacency matrix does not give a metric.
Hence the results concerning the distance matrix realization problem do not directly apply to the EMST realization problem.

\paragraph{Other areas.}
A \emph{structure preserving map} (SPE) of a graph $G$ into $\ell_2^d$ preserves some global topological structure of a set of high-dimensional data points $P$ while projecting them into a space of lower dimension \cite{SPE, NPE, GEE}. 
However, they infer the ``structure'' from $P$ itself whereas in our problem (Problem~\ref{prob:embed}), the structure $T$ is given and the data points $P$ are uniformly random.

Under a suitable formulation, the \emph{supervised metric learning problem} requires one to learn a weighted $\ell_2$ metric on a given point-set $P$ where the adjacencies $T$ are also given as an input \cite{metric_learning_paper}. However, this optimization problem is more similar to the situation in \cite{lp_tree} than to Problem~\ref{prob:embed} because the data points in a learning task are typically not random.



\subsection{Organization}
Section~\ref{sec:definitions} contains a precise definition of the graph realization problem. 
We analyze a tree-realizing algorithm in Section~\ref{sec:tree}. 
In Section~\ref{sec:graph}, we analyze an algorithm which realizes an arbitrary graph. 
The proof of the main impossibility result is outlined in Section~\ref{sec:lowerbound}. 
To make the exposition clear, some important proofs are pushed to the Appendix.

\section{Definitions}\label{sec:definitions}
We use $[d]$ to denote the set of first $d$ natural numbers, $\{1,\cdots, d\}$. $\inprod{x}{y}$ denotes the usual inner product between vectors $x$ and $y$. $d_T(u,v)$ is the length of the unique $u$-$v$ path in the unweighted tree $T$. We use the symbol $A\djunion B=C$ to denote a disjoint union of $A$ and $B$, or equivalently, a partition of $C$. $\Rpos$ denotes the positive reals, and $\Rnonneg$ denotes the nonnegative reals. For a matrix $A$, we write $A^{(i)}$ to denote the $i$th column of $A$. The expression $a\sim_U A$ denotes that the member $a$ is sampled uniformly at random from the set $A$. 

\begin{definition}[$d$-map, $d$-random map, and $(d,s)$-random map]\label{def:map}
Fix two arbitrary reals $x,y$. For any set $V$, let $f:V\rightarrow \{x,y\}^d$ be an map of $V$ into $\R{d}$ where each $f(u)$ is selected independently and uniformly in $\{x,y\}^d$. Then we call $f$ a \emph{$d$-random map}. If $f$ is not random, we call it a \emph{$d$-map} instead. We call $f$ a \emph{$(d,s)$-random map} if $\abs{x-y} = s$.
\end{definition}

\begin{definition}[Weighted Euclidean distance and its square]\label{def:dist}
Given a nonnegative vector $w=\mqty(w_1, \cdots, w_d)$, the \emph{weighted Euclidean norm} of $x \in \RR^d$ is defined as $\Vert x \Vert_w \defeq \left(\sum_i{w_i x^2 }\right)^{1/2}$. Given a $(d,s)$-map $f$, define the \emph{squared} Euclidean distance
\[
D(u,v) \defeq \Vert f(u) - f(u) \Vert_w^2 =  \sum_{i=1}^d{D_i(u,v)}
\, ,
\qquad
\text{where}
\qquad 
D_i(u,v) \defeq w_i(f(u)_i-f(v)_i)^2 \in \{0, s^2\}
\, .
\]
\end{definition}

\begin{definition}[Arboricity]\label{def:arboricity}
The \emph{arboricity} of an undirected graph $G=(V,E)$ is the minimum number of spanning forests needed to cover all the edges of the graph. Equivalently, it is the minimum number of forests $F_1, F_2, \cdots, F_a$ so that $E$ is the disjoint union of $F_1, F_2, \cdots, F_a$.
\end{definition}

\begin{definition}[Effective resistance]\label{def:effres}
Let $G=(V,E)$ be an undirected graph corresponding to an electrical network where each edge contains a unit resistance. For every vertex $u \in [n]$, let $\vec{u}$ be the $u$th standard basis vector of $\RR^n$ i.e., $\vec{u}_u=1, \vec{u}_v = 0$ for all $v \neq u$. Let $A$ be the adjacency matrix of $G$ and let $D$ be a diagonal \emph{degree matrix} of $G$ defined as $D(u,u) = \deg(u)$. Then the matrix $L=D-A$ is called the \emph{Laplacian matrix} of $G$. Let $L^+$ be the Moore-Penrose pseudoinverse of $L$. ($\rank(L)=n-c$ where $c$ is the number of connected components of $G$.) The \emph{effective resistance} between two vertices $u,v$ is given by \[\effres(u,v) = (\vec{u}-\vec{v})^T L^+ (\vec{u}-\vec{v})\, .\]
\end{definition}

The effective resistance is intimately linked with many combinatorial properties of a graph. See Ellens et al. \cite{ellens2011effective} for further reading. We use the following fact in this paper.

\begin{fact}\label{fact:effres}
Let $\mathcal{T}$ be the set of all spanning trees of the undirected graph $G=(V,E)$. Then
\[
\Pr_{T \sim_U \mathcal{T}} [e \in T] = \effres(e)\, .
\]
Moreover, $\effres(e) \geq 2/n$ for any $e \in E$.
\end{fact}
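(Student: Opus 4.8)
The plan is to establish the two claims separately. For the identity $\Pr_{T \sim_U \mathcal{T}}[e \in T] = \effres(e)$, I would invoke the classical Matrix-Tree / transfer-current theorem. First I would recall that by Kirchhoff's theorem, the number of spanning trees $\abs{\mathcal{T}}$ equals any cofactor of the Laplacian $L$, and more generally, for a fixed edge $e = (u,v)$, the number of spanning trees containing $e$ can be written in terms of a determinant of a minor of $L$. The cleanest route is to use the well-known formula from the theory of electrical networks: if we orient $e = (u,v)$ and let $\chi_e = \vec{u} - \vec{v}$, then the fraction of spanning trees using $e$ is exactly the effective resistance $\chi_e^T L^+ \chi_e$. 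I would cite this (e.g., Lyons--Peres, \emph{Probability on Trees and Networks}, or Bollob\'as's \emph{Modern Graph Theory}) rather than reprove it, since the excerpt permits assuming standard facts; alternatively, I would give the one-paragraph derivation via the Burton--Pemantle transfer-current theorem, whose marginal-at-a-single-edge case states precisely $\Pr[e \in T] = I(e)$ where $I(e)$ is the current flowing across $e$ when a unit current is imposed from $u$ to $v$, and that current is by definition (Ohm + the fact that $e$ has unit resistance) equal to the potential difference $\chi_e^T L^+ \chi_e = \effres(e)$.

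For the lower bound $\effres(e) \geq 2/n$ for every $e \in E$, I would argue as follows. Fix an edge $e = (u,v) \in E$. Since $e$ itself is a $u$–$v$ path of unit resistance, Rayleigh monotonicity (adding edges only decreases effective resistance, equivalently the effective resistance is the minimum over unit flows of the dissipated energy) gives $\effres(u,v) \leq 1$; but I need a lower bound, so monotonicity in that direction is the wrong tool. Instead I would use the probabilistic interpretation just established: $\effres(e) = \Pr_{T \sim_U \mathcal{T}}[e \in T] = \E_{T}[\mathbbm{1}[e \in T]]$. Summing over all edges, $\sum_{e \in E} \effres(e) = \E_T[\,\abs{\{e : e \in T\}}\,] = n - c \leq n - 1$, where $c \geq 1$ is the number of connected components (every spanning tree of a connected component on $k$ vertices has $k-1$ edges). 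This is the classical ``Foster's theorem'' identity. That bounds the \emph{sum} from above, which again is the wrong direction for a per-edge lower bound.

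The genuine argument for $\effres(e) \geq 2/n$: use the variational (dual) characterization $\effres(u,v) = \max\{ (\phi(u) - \phi(v))^2 / \mathcal{E}(\phi) \}$ over non-constant potentials $\phi : V \to \RR$, where $\mathcal{E}(\phi) = \sum_{(a,b) \in E} (\phi(a) - \phi(b))^2$ is the Dirichlet energy; this follows from $\effres(u,v) = \chi_e^T L^+ \chi_e$ and the fact that $L^+$ is a pseudoinverse, since $\chi_e^T L^+ \chi_e = \max_{x \perp \ker L} (\chi_e^T x)^2 / (x^T L x)$ and $x^T L x = \mathcal{E}(x)$. Now I plug in the test potential $\phi = \vec{u}$, i.e.\ $\phi(u) = 1$ and $\phi(w) = 0$ for all $w \neq u$: then $(\phi(u) - \phi(v))^2 = 1$ and $\mathcal{E}(\phi) = \deg(u) \leq n - 1$, giving $\effres(e) \geq 1/(n-1) > 1/n$; symmetrically with $\phi = \vec{v}$ we get $\effres(e) \geq 1/\deg(v)$. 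To reach the stated $2/n$, I would instead use $\phi = \tfrac12(\vec{u} - \vec{v}) + $ (constant), or more simply the symmetric test function $\phi(u) = 1$, $\phi(v) = -1$, $\phi(w) = 0$ otherwise: then $(\phi(u)-\phi(v))^2 = 4$ and $\mathcal{E}(\phi) = \deg(u) + \deg(v) - 2\cdot\mathbbm{1}[e\in E] \cdot 0 \le \deg(u)+\deg(v)$ — wait, the edge $e=(u,v)$ contributes $(\phi(u)-\phi(v))^2 = 4$ to $\mathcal{E}(\phi)$, and the other $\deg(u)-1$ edges at $u$ and $\deg(v)-1$ at $v$ each contribute $1$, so $\mathcal{E}(\phi) = 4 + (\deg u - 1) + (\deg v - 1) = \deg(u) + \deg(v) + 2 \leq 2n$, hence $\effres(e) \geq 4 / (2n) = 2/n$. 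That is exactly the claimed bound.

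\textbf{Main obstacle.} The only delicate point is citing-versus-proving the transfer-current identity $\Pr[e \in T] = \effres(e)$; the cleanest self-contained proof goes through the Matrix-Tree theorem together with the identity $\abs{\{T \in \mathcal{T} : e \in T\}} = \abs{\mathcal{T}} \cdot \chi_e^T L^+ \chi_e$, which itself is a short linear-algebra computation (expand the pseudoinverse, use that deleting/contracting $e$ corresponds to the two terms of a cofactor expansion). I expect to state this as a known fact with a reference. The energy/test-function bound in the last paragraph is then entirely elementary once the variational formula for effective resistance is in hand, so no real obstacle remains there — just the bookkeeping of which edges are incident to $u$ and $v$ and making sure $e$ is not double-counted.
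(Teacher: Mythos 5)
The paper itself offers no proof of this Fact: it is stated as a known result, with a pointer to Ellens et al.\ for background, so there is nothing internal to compare your argument against. On its own merits your proposal is correct and is the standard way to establish both claims. For the identity $\Pr_{T\sim_U\mathcal{T}}[e\in T]=\effres(e)$, citing Kirchhoff's theorem (or the single-edge case of Burton--Pemantle) is exactly what the paper implicitly does, and is appropriate. For the lower bound, your final computation is right: with the test potential $\phi=\vec{u}-\vec{v}$ the numerator of the Dirichlet variational formula is $(\phi(u)-\phi(v))^2=4$ and the energy is $4+(\deg(u)-1)+(\deg(v)-1)=\deg(u)+\deg(v)+2\leq 2n$, giving $\effres(e)\geq 4/(2n)=2/n$, which is tight on $K_n$. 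The mid-paragraph false starts (Rayleigh monotonicity, Foster's theorem) and the momentary miscount of the edge $e$'s contribution to $\mathcal{E}(\phi)$ are self-corrected and do not affect the final argument; in a cleaned-up write-up you would simply present the last computation. The only housekeeping point is to note that the variational formula $\effres(u,v)=\max_{x}(\chi_e^Tx)^2/(x^TLx)$ requires $\chi_e=\vec{u}-\vec{v}$ to lie in the range of $L$, which holds because $u$ and $v$ are endpoints of an edge and hence in the same connected component.
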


\section{Realizing a Tree}\label{sec:tree}

\newcommand{\Gap}{\mathrm{Gap}}
\begin{definition}[Gap and total gap]\label{def:deltai}
Let $e=(u,v)\in T$ and $e^\prime=(u^\prime,v^\prime)\not \in T$ be two arbitrary vertex pairs. The \emph{gap} between these two vertex pairs at coordinate $i$ is
\begin{align}\label{eq:gap_basic}
\delta_i(e,e^\prime) 
&\defeq \E_f\,[D_i(e^\prime) - D_i(e)]
\, ,
\end{align}
where $D(\vdot, \vdot)$ is defined in Definition~\ref{def:dist}. Define the \emph{total gap} between $e,e^\prime$ as
\begin{align}\label{eq:gap-total}
\Delta(e,e^\prime) 
&\defeq \sum_i{\delta_i(e,e^\prime)}
\, .
\end{align}

\end{definition}

Suppose we want to realize a tree $T$ using only Boolean weights. Only the coordinates with weight $1$ will contribute in the distance. We want to select the coordinates in such a way that the expected distance of an edge $e$ is pushed away from the expected distance of a non-edge $e^\prime$. This is the same as trying to enforce a large gap $\delta_i(e,e^\prime)$ at each coordinate which, by the linearity of expectation, would imply a large total gap $\Delta(e,e^\prime)$. This deterministic strategy is formalized in Algorithm~\ref{alg:tree-second} below.

\begin{algorithm}
\caption{RealizeTree$(T,f)$}
\label{alg:tree-second}
\begin{algorithmic}    
	\Require{$f$, a $(d,s)$-map from $V$ to $\RR^d$}
    \State Pick any real $\alpha \in (0, 1/2)$; in particular, $\alpha = 1/4$ works
    \For{$i\in [d]$ independently} 
        \State Let $p_i$ be the fraction of edges $(u,v)\in T$ such that $f(u)_i=f(v)_i$.
        \State Assign $w_i$ independently according to the following rule:
\[
w_i\defeq \left\{\begin{matrix}
1 &\text{ if } 1/2+\alpha/ \sqrt{n} \leq p_i \leq 3/4 \,,\\
0 &\text{ otherwise.}
\end{matrix}\right.
\]
   \EndFor
   \end{algorithmic}
\end{algorithm}

We devote the rest of this section analyzing Algorithm~\ref{alg:tree-second}.

\newcommand{\PrAgree}{\mathsf{PrAgree}}
\begin{definition}[Agreement probability]\label{def:pragree}
For any $e=(u,v) \in V$, define the \emph{agreement probability} as 
\[
\PrAgree(e,i)\defeq \Pr_f\,[f(u)_i=f(v)_i\mid w_i=1]
\,.
\]
\end{definition}

\begin{definition}[Weight selection probability, $q$]\label{def:q}
For Algorithm~\ref{alg:tree-second}, define the \emph{weight selection probability} 
\[
q \defeq \Pr_f[w_i=1]
\,.
\]
\end{definition}

When $e \in E,e^\prime \in \overline{E}$ are identified, we can expand Equation~\ref{eq:gap_basic} to show that
\begin{align}\label{eq:gap}
\delta_i(e,e^\prime) 
&= s^2q\, (\PrAgree(e,i)-\PrAgree(e^\prime,i))
\,.
\end{align}

A bad event occurs when there exist two vertex pairs $e\in E, e^\prime \in \overline{E}$ with $D(e^\prime) \leq D(e)$. 
Our argument for proving Theorem~\ref{thm:tree-second} has two steps. 
In the first step, we prove that for any fixed vertex pairs a bad event does not occur in expectation. 
This is equivalent to showing that the total gap $\Delta(e,e^\prime)$ is large. 
The second step has two phases. 
First, we bound the ``bad probability'' for a given vertex-pair $(u,v) \in V\times V$ via a Chernoff bound. 
Finally, we bound the total bad probability by applying a union bound over all vertex-pairs. 
Requiring that this probability be $1 - 1/n$ gives a bound on $d$.

\subsection{Step One: Proving that the Total Gap is Large}
Fix two vertex pairs $e\in E$ and $e^\prime \in \overline{E}$. 
The quantity $Z = \sum_i{D_i(e^\prime) - D_i(e)} = D(e^\prime) - D(e)$ is the sum of $d$ independent (but not identically distributed) Bernoulli random variables since $\{w_i\}$ are independent. 
We proceed by showing that the expectation of the $i$th component of this sum---i.e., $\delta_i$---is ``large.'' 
This implies that $D(e^\prime)$ is larger than $D(e)$ in expectation. 
Next, a Chernoff bound on $Z$ would reveal that $Z$ is unlikely to be ``too small'' compared to its expectation $\E Z = \Delta(e,e^\prime)$. 
Equivalently, with ``large'' probability, the length of the edge $e$ will be strictly shorter than the length of the non-edge $e^\prime$. 
This satisfies the constraints on the lengths of $e,e^\prime$ imposed by (\ref{eq:emst}).

Suppose Algorithm~\ref{alg:tree-second} assigns $w_i=1$. We want a lower bound on the gap $\delta_i \defeq \delta_i(e,e^\prime)$, or more appropriately, on the quantity $\PrAgree(e,i)-\PrAgree(e^\prime,i)$. Since $w_i=1$, we have seen exactly $p_i\abs{E}$ edges of $T$ to have the same values at both endpoints. For any two vertices $a,b\in V$, how does $\PrAgree( (a,b),i)$ depend on $p_i$? The answer is given by the following claim.

\begin{restatable}[Decaying correlation]{claim}{thmNonEdge}
\label{claim:inprod-tree-second-nonedge}
Let $T=(V,E)$ be a tree, and $p_i\in (1/2,1]$ be some positive real. Fix a coordinate $i$. Let $f_i:V\Mapsto \{x,y\}$ be a random variable defined as $\Pr[f(u)_i=x]=\Pr[f(u)_i=y]=1/2$ where $x,y\in \R{}$. Suppose, in an instance of $f(V)_i$, there are exactly $p_i\abs{E}$ edges having the same values at both endpoints. Let $(u,v)\in V\times V$ be an arbitrary vertex pair. Then, \[
\PrAgree(u,v,i)
=\half \left(1+(2p_i-1)^t\right)\] where $t$ is the length of the unique $u$-$v$ path along $T$.
\end{restatable}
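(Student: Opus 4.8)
The plan is to express $\PrAgree(u,v,i)$ as a single parity probability over the $t$ edges of the unique $u$--$v$ path of $T$, evaluate that probability, and read off the closed form. Throughout, ``coloring'' means the restriction $v\mapsto f(v)_i$ of $f$ to coordinate $i$.

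First I would fix the probability space cleanly. Root $T$ at an arbitrary vertex $\rho$, and call an edge a \emph{flip edge} (for a given coloring) if its two endpoints receive different values. A coloring is determined by the pair consisting of the root value $f(\rho)_i$ together with its set $S\subseteq E$ of flip edges, and every such pair arises; this is a bijection because $\abs{E}=n-1$. Consequently the event ``exactly $p_i\abs{E}$ edges agree'' is literally the event $\abs{S}=(1-p_i)\abs{E}$, and conditioning the i.i.d.\ uniform coloring on it leaves $S$ uniform among subsets of $E$ of that fixed size, with $f(\rho)_i$ independent and uniform. Since the conditioning event only constrains $\abs{S}$, the $S$ that results is exchangeable in the edges.

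Next comes the reduction and the evaluation. Because $T$ is a tree there is a unique $u$--$v$ path $P$ with $\abs{P}=t$, and, walking along $P$, the coloring value toggles exactly at the flip edges of $P$; hence $f(u)_i=f(v)_i$ iff $\abs{S\cap P}$ is even, so $\PrAgree(u,v,i)=\Pr[\,\abs{S\cap P}\text{ even}\,]$. I would evaluate this with the standard $\pm1$ trick: writing $\chi(e)\defeq 1-2\cdot\mathbbm{1}[e\in S]$, we have $\Pr[\abs{S\cap P}\text{ even}]=\half\bigl(1+\E\prod_{e\in P}\chi(e)\bigr)$. If the $t$ edges of $P$ behaved as independent flip edges, each present with probability $1-p_i$ (equivalently $\E\chi(e)=2p_i-1$), the expectation would factor into $(2p_i-1)^t$, giving $\half(1+(2p_i-1)^t)$. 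The same identity also falls out of an induction on $t$: delete the last path edge $(w,v)$, apply the claim to the pair $(u,w)$ at distance $t-1$, and combine with $\Pr[(w,v)\text{ agrees}]=p_i$ via the recursion $a_t=p_i\,a_{t-1}+(1-p_i)(1-a_{t-1})$, which telescopes to the stated formula.

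The delicate point — and the main obstacle — is the independence invoked in that last step. Conditioning on the \emph{global} count of agreeing edges makes $S$ a uniform subset of \emph{fixed} size, so the flip-statuses of distinct edges are negatively correlated rather than independent; the single-edge marginal $\Pr[e\in S]=1-p_i$ is exactly correct by exchangeability (this is precisely what makes the $t=1$ case $\PrAgree(u,v,i)=p_i$ work), but the higher-order products $\E\prod_{e\in P}\chi(e)$ are where the argument must be justified. The real work of the proof is therefore to handle this — for instance by passing to the model in which each edge is an independent flip edge with probability $1-p_i$, which has the same per-edge marginal and makes the factorization above exact, or by a direct coupling/telescoping argument showing the induction step survives the conditioning. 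I expect reconciling ``exactly $p_i\abs{E}$ agreeing edges'' with ``each edge independently agrees with probability $p_i$'' to be the crux of the whole claim.
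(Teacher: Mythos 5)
Your reduction is exactly the paper's: root the tree, pass to edge signs, observe that $f(u)_i=f(v)_i$ iff the number of flip edges on the unique $u$--$v$ path $P$ is even, and evaluate the parity probability via $\Pr[\abs{S\cap P}\text{ even}]=\half\left(1+\E\prod_{e\in P}\chi(e)\right)$ (the paper phrases this as ``$S_t$ is even'' for $S_t=\sum_{e\in P}\sigma(e)$). The point you single out as the crux is indeed where all the content lies, and you should know that the paper's own proof does not resolve it: it generates the coloring by choosing a uniformly random $p_i\abs{E}$-element subset of $E$ to be the agreeing edges and then simply asserts that the edge signs ``have i.i.d.\ Bernoulli distribution with parameter $p_i$'' and that $S_t$ is binomial with parameters $(t,p_i)$. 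For a fixed-size uniform random subset this is not true: the signs are exchangeable with the correct marginal $p_i$ but negatively correlated, and $\abs{S\cap P}$ is hypergeometric, not binomial. Consequently the stated identity is not exact under the literal ``exactly $p_i\abs{E}$ agreeing edges'' conditioning. A concrete counterexample: let $T$ be the path on $5$ vertices, $p_i=3/4$ (so the flip set is a single edge, uniform among the four), and let $u,v$ be at distance $t=2$; then $\PrAgree(u,v,i)=\Pr[\text{the flip edge avoids the two path edges}]=1/2$, whereas the formula gives $\half\left(1+(1/2)^2\right)=5/8$.

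So your hesitation is warranted, and the gap you flag is a genuine gap --- but it is a gap you share with the paper rather than one the paper closes and you miss. The clean repair is the first one you name: interpret $p_i$ as the bias of independent Bernoulli edge signs rather than as an exact empirical fraction; in that model $\E\prod_{e\in P}\chi(e)=(2p_i-1)^t$ factors immediately and the claim is exact, which is evidently the model the paper's proof is implicitly working in. If one insists on the exact-count conditioning, the identity must be downgraded to an approximation with a without-replacement correction, and one would then have to check that the correction does not destroy the lower bound $\PrAgree(e,i)-\PrAgree(e^\prime,i)\geq(2p_i-1)(1-p_i)$ used downstream; neither you nor the paper supplies that argument. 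In short: your write-up correctly locates the difficulty and stops at the same place the paper does, except that the paper asserts independence where you (rightly) decline to.
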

We remark that the proof of the above claim is the only portion of our analysis which requires $T$ to be a tree. Claim~\ref{claim:inprod-tree-second-nonedge} implies that
\[
\PrAgree(e,i)-\PrAgree(e^\prime,i)
= \frac{(2p_i-1)-(2p_i-1)^t}{2} 
\geq \frac{(2p_i-1)-(2p_i-1)^2}{2}
= (2p_i-1)(1-p_i)
\, ,
\]
since $t\geq 2$ for $e^\prime\not \in T$ and $(2p_i-1)\leq 1$. It follows that 
\begin{align}\label{eq:deltai-simplified}
\delta_i &\geq s^2q(2p_i-1)(1-p_i)
\,.
\end{align}
However, we want an expression for the right hand side which does not depend on $i$ so that the sum $\sum{\delta_i}$, in turn, does not depend on $i$. After some calculations we get the following result; we defer the proof till Section~\ref{sec:tree-proofs}.

\begin{restatable}[Bounds on $\delta_i$, $p_i$, and $q$]{claim}{thmDeltai}
\label{claim:deltai}
The probability $p_i$ in Algorithm~\ref{alg:tree-second} is less than $1-\alpha/\sqrt{n}$. 
Moreover, the probability $q:=\Pr[w_i=1]$ is at least $1/2-\alpha - 2^{(H(1/4)-1)n}$ and at most $1-p=1/2-\alpha/\sqrt{n}$. 
Here, $H(\vdot)$ is the binary entropy function. 
In particular, $q\geq 1/6$ when $n\geq 20$ and $\alpha=1/4$. 
The gap $\delta_i = \Omega(1/\sqrt{n})$ when $\alpha, q, s$ are constants. 
Specifically, $\delta_i \geq s^2q(2\alpha/\sqrt{n})(1-\alpha/\sqrt{n})$. 
\end{restatable}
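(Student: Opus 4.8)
\emph{Proof proposal.} The plan is to reduce all four assertions to the distribution of a single random variable, $N_i \defeq (n-1)p_i$, the number of edges of $T$ that take equal values at both endpoints in coordinate $i$, and then to read everything off from standard binomial estimates. The first and crucial step is to observe that $N_i \sim \mathrm{Binomial}(n-1,\tfrac12)$. Root $T$ at an arbitrary vertex; since $T$ is a tree, its $n-1$ edges are exactly the parent-edges, one per non-root vertex. Reveal the coordinate values $f(v)_i$ of the vertices in BFS order: when the value at a non-root vertex $v$ is revealed, its parent's value is already fixed, so the parent-edge of $v$ agrees with probability exactly $\tfrac12$, independently of everything revealed so far. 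Summing the resulting $n-1$ i.i.d.\ $\mathrm{Bernoulli}(\tfrac12)$ indicators gives the claimed law of $N_i$.

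Granting this, the four assertions follow by bookkeeping. Write $m = n-1$. \textbf{(i)} When $w_i = 1$, Algorithm~\ref{alg:tree-second} forces $p_i \le \tfrac34$; and $\alpha < \tfrac12 \le \tfrac{\sqrt n}{4}$ for $n \ge 4$ gives $\tfrac{\alpha}{\sqrt n} < \tfrac14$, hence $p_i \le \tfrac34 < 1 - \tfrac{\alpha}{\sqrt n}$. \textbf{(ii)} For the lower bound on $q$, decompose $q = \Pr[\tfrac12 + \tfrac{\alpha}{\sqrt n}\le p_i \le \tfrac34] \ge \Pr[p_i \ge \tfrac12] - \Pr[\tfrac12 \le p_i < \tfrac12 + \tfrac{\alpha}{\sqrt n}] - \Pr[p_i > \tfrac34]$. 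The first term is $\ge \tfrac12$ by the symmetry of $N_i$ about $m/2$; the last is at most $2^{(H(1/4)-1)n}$ via the relative-entropy Chernoff bound $\Pr[N_i \ge \tfrac34 m] \le 2^{-m(1-H(3/4))}$ together with $H(3/4)=H(1/4)$; and the middle term is at most (the number of integers in $[\tfrac m2,\tfrac m2+\tfrac{m\alpha}{\sqrt n})$, which is $O(\alpha\sqrt n)$ because $\tfrac{m}{\sqrt n}<\sqrt n$) times the mode probability $\binom{m}{\lfloor m/2\rfloor}2^{-m}\le\sqrt{2/(\pi m)}$, which altogether is at most $\alpha$. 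This yields $q \ge \tfrac12 - \alpha - 2^{(H(1/4)-1)n}$. The upper bound $q\le\tfrac12-\tfrac{\alpha}{\sqrt n}$ follows from $q \le \Pr[p_i\ge\tfrac12+\tfrac{\alpha}{\sqrt n}]$ and binomial anti-concentration around the mean. \textbf{(iii)} Setting $\alpha=\tfrac14$, the bound in (ii) becomes $q\ge\tfrac14-2^{(H(1/4)-1)n}$; since $H(1/4)\approx 0.811$, at $n=20$ the subtracted term is $\approx 2^{-3.77}<\tfrac14-\tfrac16$, and it decreases in $n$, so $q\ge\tfrac16$ for all $n\ge20$. \textbf{(iv)} Equation~\eqref{eq:deltai-simplified} gives $\delta_i \ge s^2 q\,(2p_i-1)(1-p_i)$ whenever $w_i=1$; the map $p\mapsto(2p-1)(1-p)$ has derivative $3-4p\ge0$ on $[\tfrac12,\tfrac34]$, so over the admissible range $p_i\in[\tfrac12+\tfrac{\alpha}{\sqrt n},\tfrac34]$ the right-hand side is minimized at the lower threshold, giving the stated $\delta_i=\Omega(1/\sqrt n)$ for constant $\alpha,q,s$.

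The conceptual heart is the first step: recognizing that $N_i$ is \emph{exactly} $\mathrm{Binomial}(n-1,\tfrac12)$ (a fact that uses acyclicity and is, incidentally, the analogue of the role played by Claim~\ref{claim:inprod-tree-second-nonedge} elsewhere). Once that is in hand, the only genuinely delicate point is making the anti-concentration estimates of (ii) precise enough to certify the explicit constant $q\ge\tfrac16$ at the small value $n=20$: there the window $[\tfrac m2,\tfrac m2+\tfrac{m\alpha}{\sqrt n})$ contains only about one integer, so the small-$n$ and large-$n$ regimes should be argued slightly differently rather than through one uniform bound. Everything else is routine manipulation of binomial tails and of the elementary function $(2p-1)(1-p)$.
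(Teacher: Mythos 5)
Your proposal is correct and follows essentially the same route as the paper: your lower bound on $q$ uses the same decomposition (symmetric mass above the mean, minus a central window bounded by its width times the mode probability, minus the upper tail bounded by the entropy estimate $2^{(H(1/4)-1)n}$), which is precisely the paper's anti-concentration proposition $\Tail(p) > 1/2-\alpha$ combined with the identity $\Tail(p) = q + \Tail(3/4)$. Two small differences are worth recording. First, you make explicit that $(n-1)p_i \sim \mathrm{Binomial}(n-1,\tfrac12)$ via the edge-revealing argument; the paper uses this silently (it is the same i.i.d.\ edge-sign observation that powers Claim~\ref{claim:inprod-tree-second-nonedge}), so spelling it out is a genuine gain in rigor. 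Second, for the final assertion you minimize $(2p-1)(1-p)$ over the admissible interval by monotonicity rather than via the paper's $\epsilon$-perturbation computation; your route in fact exposes a small slip in the printed claim, since evaluating at $p_i = 1/2+\alpha/\sqrt{n}$ yields the factor $1-p = 1/2-\alpha/\sqrt{n}$ rather than the $1-\alpha/\sqrt{n}$ that appears in the statement (this affects only the constant, not the $\Omega(1/\sqrt{n})$ rate). Your closing caveat about counting integers in the central window is well taken, but the same looseness is present in the paper's own Proposition~\ref{prop:anticonc-binomial}, so it is not a defect specific to your argument.
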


\subsection{Step Two: Bounding the Bad Probability via Chernoff/Union Bound}
We have already seen that for two fixed vertex pairs $e \in E$ and $e^\prime \in \overline{E}$, the gap between their respective expectations, i.e., $\Delta(e,e^\prime)$, is large. 
Let $\theta \defeq \Delta(e,e^\prime)/2$ be the midpoint of this gap. 
A \emph{bad event} occurs when either $D(e) > \theta$ or $D(e^\prime) < \theta$. 
The probability of an individual bad event can be obtained via the Chernoff-Hoeffding bound. 
Note that there can be at most ${n \choose 2}$ bad events. 
The probability that no bad event occurs can be found via a union bound. 
By setting this probability to at most $1 - 1/n$, we get a bound on $d$. 
The exact statement is recorded the following lemma; we defer its proof till Appendix~\ref{sec:tree-proofs}.

\begin{restatable}[Bounding $d$ from gap $\delta_i$]{lemma}{thmEdgeChernoff}
\label{lemma:edge-chernoff}
Let $f$ be a $(d,s)$-random map of the vertices $V$. 
Let $w_1, \cdots, w_d$ be the weights from Algorithm~\ref{alg:tree-second} invoked on the tree $T=(V,E)$ and the embedding $f$. 
Define $\delta \defeq \inf{\delta_i(e,e^\prime)}$ 
where the infimum is taken over all $i \in [d], e\in E$, and $e^\prime \in \overline{E}$. 
If $d\geq (6s^4\log n)/\delta^2$, the constraints (\ref{eq:emst}) are satisfied
with probability $1-1/n$ over the random choice of $f$ 
with $\theta = d\delta/2$.
\end{restatable}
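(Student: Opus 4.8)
The plan is to exhibit a single real threshold $\theta$ and show that, with probability at least $1-1/n$ over the random map $f$, every edge has squared length below $\theta$ and every non-edge has squared length at least $\theta$ — which is precisely the system (\ref{eq:emst}). The point that makes the target probability $1-1/n$ reachable is that it suffices to control one ``bad event'' per vertex pair $\{u,v\}$, of which there are exactly $\binom{n}{2}$ (the edges and non-edges partition $\binom{V}{2}$), rather than one per (edge, non-edge) pair, which would give the far lossier count $|E|\cdot|\overline{E}|$.

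First I would pin down $\theta$. Fix a reference edge $e_0\in E$. Because the coordinates of $f$ are i.i.d.\ and Algorithm~\ref{alg:tree-second} treats all tree-edges symmetrically, Claim~\ref{claim:inprod-tree-second-nonedge} (with path length $t=1$) shows that every edge $e$ has the same expectation $\E_f[D(e)]=\E_f[D(e_0)]$. For a non-edge $e'\in\overline{E}$, $\E_f[D(e')]-\E_f[D(e_0)]=\Delta(e_0,e')=\sum_{i\in[d]}\delta_i(e_0,e')\geq d\delta$, by the definition of $\delta$, and $\delta>0$ by Claim~\ref{claim:deltai}. I would therefore place $\theta$ a margin $d\delta/2$ above this common expected squared edge length, i.e.\ $\theta\defeq\E_f[D(e_0)]+d\delta/2$; then $\theta$ is at least $d\delta/2$ below $\E_f[D(e')]$ for every non-edge $e'$.

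The concentration step is then applied separately to each edge and each non-edge. For any pair $\{u,v\}$, $D(u,v)=\sum_{i\in[d]}D_i(u,v)$ is a sum of $d$ independent random variables, each in $[0,s^2]$: the summands are independent across $i$ because $w_i$ (a function of $p_i$) and $(f(u)_i-f(v)_i)^2$ depend only on the $i$-th coordinates of the vertex embeddings, and these coordinate-slices are mutually independent by Definition~\ref{def:map}. Hoeffding's inequality with deviation $d\delta/2$ gives $\Pr_f[D(e)\geq\theta]\leq\exp(-d\delta^2/(2s^4))$ for each edge $e$, and, using $\E_f[D(e')]\geq\theta+d\delta/2$ in the lower tail, the same bound $\Pr_f[D(e')<\theta]\leq\exp(-d\delta^2/(2s^4))$ for each non-edge $e'$. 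A union bound over the $\binom{n}{2}$ pairs yields total failure probability at most $\binom{n}{2}\exp(-d\delta^2/(2s^4))$, which is $\leq 1/n$ as soon as $d\geq 6s^4\log n/\delta^2$. On the complementary event $\max_{e\in E}D(e)<\theta\leq\min_{e'\in\overline{E}}D(e')$, so $(u,v)\in E\iff D(u,v)<\theta$.

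I expect the only non-routine point to be the reduction to a single global threshold, and hence to just $\binom{n}{2}$ bad events: this rests on the equal-expectation observation for edges and on the uniform lower bound $\delta$ on every per-coordinate gap. Once that is in place, the rest is a standard Hoeffding-plus-union-bound computation, and the quantitative control of $\delta$, $p_i$, and $q$ is exactly what Claims~\ref{claim:inprod-tree-second-nonedge} and~\ref{claim:deltai} are there to supply.
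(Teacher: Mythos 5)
Your proof is correct and follows essentially the same route as the paper's: a Hoeffding bound on each $D(u,v)$ (a sum of $d$ independent $[0,s^2]$-valued coordinate contributions) with deviation $d\delta/2$ from its mean, followed by a union bound over the $\binom{n}{2}$ vertex pairs rather than over (edge, non-edge) pairs. The one place you go beyond the paper is in pinning down the threshold: you set $\theta=\E_f[D(e_0)]+d\delta/2$ after observing, via Claim~\ref{claim:inprod-tree-second-nonedge} with $t=1$ and the exchangeability of tree edges under the conditioning on $p_i$, that all edges share a common expected squared length. This is the right reading of the lemma: the value $\theta=d\delta/2$ in the statement is a deviation from the mean rather than an absolute threshold --- taken literally it cannot satisfy (\ref{eq:emst}), since $\E_f[D(e)]$ is of order $ds^2q$ while $d\delta/2$ is of order $ds^2q/\sqrt{n}$ --- so your explicit shift by the common edge expectation is a needed clarification of the intended argument, not a departure from it.
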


\subsection{Main Theorem} 
\begin{restatable}[Realizing a tree]{theorem}{thmTree}
\label{thm:tree-second}
Suppose $d, n \in \mathbb{N}, n \geq 20, d=\Omega(n\log n)$. Let $T=(V,E)$ be a given tree on $n$ vertices. Let $f$ be a given $(d,s)$-random map of $V$. 
Then, $R\left( T, d \right) \geq 1-1/n$. 
In particular, Algorithm~\ref{alg:tree-second}, when using the parameter $\alpha = 1/4$, 
runs in time $nd$ and generates the weights $w_1, \cdots, w_d \in \{0, 1\}$ such that with probability $1-1/n$, 
the constraints (\ref{eq:emst}) are satisfied for some $\theta > 0$. 
The absolute constant hidden under the $\Omega$ notation in the expression of $d$ is $864 = 6/(2q\alpha)^2$ 
where $\alpha = 1/4, q = 1/6$ according to Claim~\ref{claim:deltai}; in particular, this constant is independent of the choice of $s$.
\end{restatable}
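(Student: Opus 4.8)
The plan is to assemble Theorem~\ref{thm:tree-second} from the three technical ingredients already established: Claim~\ref{claim:inprod-tree-second-nonedge} (which turns a census count $p_i$ into an exact agreement probability along a tree path), Claim~\ref{claim:deltai} (which converts this into a uniform lower bound on the per-coordinate gap together with the bounds $q \geq 1/6$, $p_i < 1-\alpha/\sqrt n$), and Lemma~\ref{lemma:edge-chernoff} (which converts a lower bound on the gap into an upper bound on the required dimension via a Chernoff-plus-union-bound argument). Throughout I fix $\alpha = 1/4$ as in the statement.

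First I would dispatch the easy parts. Algorithm~\ref{alg:tree-second} is deterministic once $f$ is given: it iterates over the $d$ coordinates, and for each coordinate it scans the $n-1$ edges of $T$ to compute $p_i$ and then does an $O(1)$ comparison to set $w_i$; hence the running time is $O(nd)$, and by construction every $w_i$ lies in $\{0,1\}$. The only substantive claim that remains is the guarantee $R(T,d) \geq 1-1/n$.

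For that, I would invoke Claim~\ref{claim:deltai} with $n \geq 20$ and $\alpha = 1/4$ to get $q = \Pr[w_i = 1] \geq 1/6$ together with the lower bound $\delta := \inf_{i,\,e\in E,\,e'\in\overline{E}} \delta_i(e,e') \geq s^2 q\,(2\alpha/\sqrt n)(1-\alpha/\sqrt n) > 0$ (the infimum over $i$ is immaterial since the coordinates of $f$ are i.i.d., and the infimum over $(e,e')$ is controlled by the worst case $t \geq 2$, $p_i \geq 1/2+\alpha/\sqrt n$ used to derive~(\ref{eq:deltai-simplified})). Feeding this $\delta$ into Lemma~\ref{lemma:edge-chernoff}: whenever $d \geq 6s^4\log n/\delta^2$, the constraints~(\ref{eq:emst}) hold simultaneously for all vertex pairs with probability at least $1-1/n$ over $f$, with threshold $\theta = d\delta/2 > 0$. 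Substituting the bound on $\delta$, the factor $s^4$ cancels, leaving the sufficient condition
\[
d \;\geq\; \frac{6\,n\log n}{(2q\alpha)^2\,(1-\alpha/\sqrt n)^2}\,,
\]
whose right-hand side is $\Theta(n\log n)$ with leading constant $6/(2q\alpha)^2 = 864$ at $q=1/6$, $\alpha=1/4$ (the correction factor $(1-\alpha/\sqrt n)^{-2}\to 1$), and visibly independent of $s$. Thus $d = \Omega(n\log n)$ suffices. Finally, since ``the weights produced by the algorithm together with some $\theta>0$ satisfy~(\ref{eq:emst})'' is by definition the event ``$T$ is realized under $f$'' in Problem~\ref{prob:embed}, this yields $R(T,d) \geq 1-1/n$.

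I expect the proof of the theorem itself to be short; the genuine difficulty lies in the cited results — in particular Claim~\ref{claim:inprod-tree-second-nonedge}, the only place where the tree structure is essential, and the bookkeeping in Claim~\ref{claim:deltai} that makes the gap bound uniform in $i$ while handling the randomness of $p_i$. Within the synthesis the one point requiring care is that the lower bound on $\delta$ must hold simultaneously for every coordinate and every edge/non-edge pair, so that a single threshold $\theta = d\delta/2$ serves all of the at most $\binom{n}{2}$ constraints when the union bound inside Lemma~\ref{lemma:edge-chernoff} is applied; and one must keep the hypothesis $n \geq 20$ in force so that $q \geq 1/6$, while being honest that $864$ is the asymptotic constant rather than a uniform finite-$n$ one.
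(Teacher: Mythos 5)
Your proposal is correct and follows essentially the same route as the paper's own proof: combine Claim~\ref{claim:deltai} (with $n\geq 20$, $\alpha=1/4$, hence $q\geq 1/6$ and $\delta \geq s^2q(2\alpha/\sqrt n)(1-\alpha/\sqrt n)$) with Lemma~\ref{lemma:edge-chernoff} to get $d \geq 6s^4\log n/\delta^2 = 864\,n\log n$ up to the $(1-\alpha/\sqrt n)^{-2}$ factor, which the paper simply drops. Your version is, if anything, slightly more careful in tracking that correction factor and in flagging that $864$ is the asymptotic constant.
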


\begin{proof}
Let $n\geq 20$ and $\alpha=1/4$. By Claim~\ref{claim:deltai}, $q\geq 1/6$. 
Ignoring the $o(n)$ term in the expression of $\delta_i$ from Claim~\ref{claim:deltai}, we get 
$\delta_i
\gtrapprox 
\delta 
= s^2 q (2\alpha/\sqrt{n})
\geq s^2/12\sqrt{n}
$. 
The bound on $d$ from Lemma~\ref{lemma:edge-chernoff} gives
$d 
\geq 6s^4\log n / \delta^2
= 6s^4\log n ( 12\sqrt{n}/s^2 )^2
= C n\log n
$ where 
$
C = 6 \times 12^2 = 864
$.
This $d$ is sufficient so that the weights generated by Algorithm~\ref{alg:tree-second} with $\alpha = 1/4$ satisfy the constraints in Equation~(\ref{eq:emst}) with probability $1-1/n$. 

Using the expression of $\theta$ from Claim~\ref{lemma:edge-chernoff} from Claim~\ref{claim:deltai}, we get
\[
\theta
= \frac{d\delta}{2} 
= d s^2 q (2\alpha/\sqrt{n})(1-\alpha/\sqrt{n})/2
= \frac{d s^2}{4} q \left(\frac{1}{\sqrt{n}}-\frac{1}{4n} \right)
\, ,
\]
where 
\[
q \in \left( \frac{1}{4} - 2^{(H(1/4) - 1)n}, 1/2 - \frac{\alpha}{\sqrt{n}} \right)
\]
using Claim~\ref{claim:deltai}.

\end{proof}

\paragraph{Some remarks.} 
A weighted $\ell_1$ distance between two points $f(u), f(v)$ is defined as $\sum_{i}{w_i \abs{f(u)_i-f(v)_i}}$. 
It is not hard to see that if we use this metric in the preceding analysis, $s$ would appear as a linear factor in the expression of $\delta_i$ (from Definition~\ref{def:deltai}) since $f$ is a $(d,s)$-map. In addition, since the final bound on $d$ does not depend on $s$, an algorithm which realizes $T$ with a Boolean-weighted $\ell_2$ norm for a given $d$ would also work for a Boolean-weighted $\ell_1$ norm with the same $d$. However, the expression for the threshold $\theta$ would be affected since it depends on $s$. We omit further details.

If we modify Algorithm~\ref{alg:tree-second} to tally edge-disagreements instead of edge-agreements, we would realize the \emph{complement} of $T$. 
The factor $n=(\sqrt{n})^2$ in the bound $d=\Omega(n\log n)$ in Theorem~\ref{thm:tree-second} is an artifact of the algorithm used to realize $T$. 
In particular, it comes from the bias $p=1/2+O(1/\sqrt{n})$ in Algorithm~\ref{alg:tree-second}. 
The $\log n$ factor in the bound is an artifact of the $1-1/\poly(n)$ probability required from the Chernoff bound in the proof of Lemma~\ref{lemma:edge-chernoff}, and that there are $\poly(n)$ vertex-pairs in the union bound. 
It is hard to see how to improve the the current analysis without making a non-trivial change in Algorithm~\ref{alg:tree-second}. 

The bound on $d$ does not depend on $s$. 
Consequently, it would remain unchanged as long as $w_i \in \Rnonneg$ since such a scaling would simply scale $s$.


\section{Realizing a Graph}\label{sec:graph}
Let us elaborate on our discussion in Section~\ref{sec:intro-graph}. As in Section~\ref{sec:intro-tree}, suppose the set of random points are $F \in \{\pm 1\}^{d \times n}$. The analysis of the census strategy in the proof of Claim~\ref{claim:inprod-tree-second-nonedge} requires that the graph being realized is indeed a tree. Let us define the \emph{edge sign} $\sigma_i(u,v) \defeq +1$ if $u_i=v_i$, and $-1$ otherwise. The main observation in that proof is the following: For any $i\in [d]$, the uniform distribution of coordinate-values $F\vec{u}_i \in \{\pm 1\}, u \in V$ is identical to the uniform distribution of the edge signs $\sigma_i(e) \in \{\pm 1\}, e\in E$ coupled with a random assignment $F\vec{r}_i \in \{\pm 1\}$ to an arbitrary vertex $r\in V$. 

This observation, however, works only when $T$ is a tree; it breaks down if we want to realize a graph $G$ which contains a cycle. For example, suppose $G$ contains a triangle $(u,v,w)$. For every coordinate $i$, if $\sigma_i(u,v)=\sigma_i(v,w)$ then $u_i$ must equal $w_i$. In general, $\prod_{e\in C}{\sigma_i(e)}=1$ for every cycle $C$. Due to this correlation in coordinate values along a cycle, a uniform distribution of the coordinate values does not translate to a uniform distribution on the edge signs of $G$. Consequently, the census strategy is not applicable when $G$ contains a cycle.


The random sample strategy mentioned in Section~\ref{sec:intro-tree}, however, is immune to any correlation. It samples a single edge. By this virtue it is oblivious to any structure in the graph. We use this observation to devise an idea: \emph{what if we use an acyclic subgraph as a representative of $G$?} 
\paragraph{A strategy.}
Let $\mathcal{A}$ be a collection  of acyclic subgraphs of $G$. 
We would sample a member $A$ from $\mathcal{A}$ uniformly at random and run the tree-realization algorithm on $A$. 
This eliminates all cycles from our view, but it is not obvious that the resulting weights would satisfy the edges not on the subgraph. 
It turns out that the gap between the two kinds of inner products (edges vs. non-edges) depends on the probability that a given edge is included in the uniformly sampled member $A$. 
This is why $\mathcal{A}$ must cover every edge of $G$. 
This strategy is applied by the following algorithm. 


\begin{algorithm}
\caption{RealizeGraph$(G,f,\mathcal{A})$}
\label{alg:graph}
\begin{algorithmic}   
	\Require{$f$, a $(d,s)$-map from $V$ to $\RR^d$}
	\Require{$\mathcal{A}$, a family of acyclic subgraphs of $G$ such that every edge $e\in G$ belongs to at least one member of $\mathcal{A}$.}
	\State Sample an element $A$ uniformly at random from $\mathcal{A}$
    \State Invoke $\mathrm{RealizeTree}(A, f)$
\end{algorithmic}
 \end{algorithm}

The members of $\mathcal{A}$ do not have to be trees: they could be a single edge, a subtree, a forest, a matching, etc. In particular, $\mathcal{A}$ can contain multiple kinds of acyclic subgraphs as long as their union covers all edges. 

\begin{restatable}{claim}{claimGraph}
\label{claim:graph}
Algorithm~\ref{alg:graph} realizes $G$ with $\displaystyle d=\Omega(\frac{n}{r^2}\log n)\, , \text{ where} \quad r\defeq \min_{(u,v)\in E}{\Pr_{A \sim \mathcal{A}}[(u,v)\in A]}$.
\end{restatable}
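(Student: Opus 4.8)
The plan is to reduce everything to the tree analysis of Section~\ref{sec:tree}. Algorithm~\ref{alg:graph} is read as drawing, for each coordinate $i\in[d]$ independently, a fresh acyclic subgraph $A_i\sim_U\mathcal{A}$ and then applying the census weight-selection rule of Algorithm~\ref{alg:tree-second} to $A_i$ (a single shared $A$ cannot work, since every edge of $G$ missing from $A$ would then behave exactly like a non-edge). Under this reading the coordinates remain mutually independent, so the two-step skeleton of Section~\ref{sec:tree} goes through: first lower-bound the per-coordinate gap, now with the expectation taken over both $A_i$ and $f$, and then invoke the Chernoff/Hoeffding-plus-union-bound argument of Lemma~\ref{lemma:edge-chernoff} essentially verbatim with $\delta$ replaced by the new gap. (One should assume the members of $\mathcal{A}$ are spanning forests with $\Theta(n)$ edges, so that the weight-selection probability $q$ is bounded below by a constant uniformly over $\mathcal{A}$, as in Claim~\ref{claim:deltai}.)

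For the first step, fix an edge $e=(u,v)\in E$ and a non-edge $e'=(u',v')\in\overline{E}$ and condition on the sampled $A_i$. Because $A_i$ is a forest, Claim~\ref{claim:inprod-tree-second-nonedge} applies with the natural convention that two vertices in different components of $A_i$ receive independent coordinate values (a ``path of infinite length''), so that
\[
\PrAgree^{A_i}(x,y,i)=\frac12\bigl(1+(2p_i-1)^{d_{A_i}(x,y)}\bigr).
\]
Averaging (\ref{eq:gap}) over $A_i$ and splitting on whether $e$ survives in $A_i$: conditioned on $e\in A_i$, an event of probability at least $r$ by definition of $r$, we have $d_{A_i}(u,v)=1$ while $d_{A_i}(u',v')\ge 2$ (since $e'\notin E\supseteq E(A_i)$), so the tree bound of Claim~\ref{claim:deltai} gives a conditional per-coordinate gap $\Omega(s^2/\sqrt n)$; conditioned on $e\notin A_i$, both $d_{A_i}(u,v)$ and $d_{A_i}(u',v')$ are at least $2$, so the conditional gap is $O(s^2/n)$ in absolute value, with the cap $p_i\le 3/4$ (forcing $2p_i-1\le\frac12$) making $(2p_i-1)^t$ decay geometrically in $t$. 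Combining the two cases should give $\delta=\Omega(rs^2/\sqrt n)$, and feeding $\delta=\Theta(rs^2/\sqrt n)$ into $d\ge 6s^4\log n/\delta^2$ from Lemma~\ref{lemma:edge-chernoff} yields $d=\Omega\bigl((n/r^2)\log n\bigr)$ with the hidden constant independent of $s$, exactly as in Theorem~\ref{thm:tree-second}.

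The delicate point is the gap estimate: one must show the ``wrong-direction'' contribution from the event $e\notin A_i$ does not swamp the $\Omega(rs^2/\sqrt n)$ signal. The naive bound $O(s^2/n)$ only beats $\Omega(rs^2/\sqrt n)$ when $r=\Omega(1/\sqrt n)$, whereas the intended instantiations include $r=\Theta(1/n)$ (taking $\mathcal{A}$ to be the set of all spanning trees, where $r=\min_e\effres(e)\ge 2/n$ by Fact~\ref{fact:effres}). Closing this amounts to proving that for a fixed non-adjacent pair $(u',v')$ the quantity $\E_{A_i}\!\bigl[(2p_i-1)^{d_{A_i}(u',v')}\,\mathbbm{1}[w_i=1]\bigr]$ is itself $O(rs^2/\sqrt n)$ --- that a non-edge is typically not short in the sampled member, and that the geometric decay in $d_{A_i}(u',v')$ provided by the $3/4$ cap, together with the edge-covering hypothesis on $\mathcal{A}$, suffices to control the contribution of the short $A_i$-paths between $u'$ and $v'$. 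I expect this estimate, rather than the Chernoff/union-bound bookkeeping, to be where the bulk of the work lies.
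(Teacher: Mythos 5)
Your plan is the paper's plan: the paper likewise expands $\PrAgree(e,i)$ by conditioning on whether $e\in A$, subtracts the non-edge term, keeps $r(e)\bigl(\PrAgree(e,i\mid e\in A)-\PrAgree(e,i\mid e\notin A)\bigr)\geq r\cdot\Omega(1/\sqrt n)$ via Claims~\ref{claim:inprod-tree-second-nonedge} and~\ref{claim:deltai}, and feeds $\delta=\Omega(r/\sqrt n)$ into Lemma~\ref{lemma:edge-chernoff}. Your insistence on resampling $A$ independently for each coordinate is correct and necessary: with one shared $A$, an edge of $G\setminus A$ and a non-edge at the same $A$-distance have identical conditional distance distributions, so no threshold can separate them with high probability; the paper's pseudocode samples a single $A$, but its proof only makes sense under your per-coordinate reading, so you have quietly repaired the algorithm. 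Your caveat about the members of $\mathcal{A}$ having $\Theta(n)$ edges (so that the $\alpha/\sqrt n$ threshold and the lower bound on $q$ make sense) is also a point the paper glosses over.

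The divergence is exactly at the step you flag as the crux, and you are right that it is the crux --- but the paper does not close it either. It writes $\PrAgree(e,i\mid e\notin A)-\PrAgree(e',i\mid e'\notin G)=0$ ``since conditioned on any $A$, the last two terms are the same,'' whereas by Claim~\ref{claim:inprod-tree-second-nonedge} these equal $\frac12\left(1+(2p_i-1)^{d_A(e)}\right)$ and $\frac12\left(1+(2p_i-1)^{d_A(e')}\right)$, which coincide only when $d_A(e)=d_A(e')$. So the estimate you leave open is dispatched in the paper by an assertion, not a proof. Worse, the estimate you ask for ($\E_A[(2p_i-1)^{d_A(e')}]=O(r/\sqrt n)$) is false in general. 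Take $G$ to be a clique $K_{n-3}$ together with a pendant vertex $w$ attached to one clique vertex and two further pendant edges $wu'$, $wv'$, and take $\mathcal{A}$ to be the set of all spanning trees (the instantiation the paper itself proposes via Fact~\ref{fact:effres}). Every spanning tree contains the bridges $wu'$ and $wv'$, so $d_A(u',v')=2$ always and the non-edge $(u',v')$ has agreement excess at least $\frac12(2\alpha/\sqrt n)^2=\Theta(1/n)$; a clique edge has $r(e)=\effres(e)=\Theta(1/n)$ and, since $\Pr[d_A(e)=2\mid e\notin A]=O(1/n)$, its agreement excess is only $O(n^{-3/2})$. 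The expected gap for that edge/non-edge pair has the wrong sign, so no value of $d$ helps. Hence the missing step is not bookkeeping: as stated for arbitrary covering families $\mathcal{A}$, the claim needs an extra hypothesis controlling $\Pr_A[d_A(u',v')=t]$ for non-edges (which the edge-disjoint-forest instantiation of Corollary~\ref{thm:graph} partially, but not automatically, provides), and neither your sketch nor the paper supplies it.
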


\begin{restatable}[Realizing a graph]{corollary}{thmGraph}
\label{thm:graph}
For every graph $G=(V,E)$ on $n$ vertices, $n\geq 20$, $R(G,d)\geq 1-o(1)$ if $d=\Omega(n\abs{E}\log n)$. In particular, the weights generated by Algorithm~\ref{alg:graph} can realize $G$ with probability at least $1-1/n$ with $d=\Omega(n a^2\log n)$ where $a$ is the arboricity of $G$. 
\end{restatable}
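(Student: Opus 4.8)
The plan is to derive Corollary~\ref{thm:graph} directly from Claim~\ref{claim:graph} by running Algorithm~\ref{alg:graph} on the arboricity decomposition of $G$. Claim~\ref{claim:graph} already guarantees a realization of $G$ with probability $1-1/n$ whenever $d=\Omega\big((n/r^{2})\log n\big)$, where $r\defeq\min_{(u,v)\in E}\Pr_{A\sim_U\mathcal{A}}[(u,v)\in A]$ for any family $\mathcal{A}$ of acyclic subgraphs that covers $E$; so the only thing left to do is choose $\mathcal{A}$ well. By Definition~\ref{def:arboricity}, if $a$ is the arboricity of $G$ then $E=F_{1}\djunion\cdots\djunion F_{a}$ for forests $F_{1},\dots,F_{a}$, and I would take $\mathcal{A}\defeq\{F_{1},\dots,F_{a}\}$. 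This is an admissible input to Algorithm~\ref{alg:graph}: each $F_{j}$ is acyclic and every edge lies in exactly one of them. Because the $F_{j}$ partition $E$, every edge $e$ satisfies $\Pr_{A\sim_U\mathcal{A}}[e\in A]=1/a$, hence $r=1/a$, and Claim~\ref{claim:graph} gives a realization with probability $\geq1-1/n$ as soon as $d=\Omega\big((n/r^{2})\log n\big)=\Omega(na^{2}\log n)$. This is exactly the sharp ``in particular'' clause of the corollary.

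To get the weaker bound $d=\Omega(n\abs{E}\log n)$ I would then feed the arboricity estimate into the Nash--Williams formula $a(G)=\max_{H\subseteq G}\lceil\abs{E(H)}/(\abs{V(H)}-1)\rceil$. Let $H$ be a subgraph attaining the maximum, with $p\defeq\abs{V(H)}$ and $q\defeq\abs{E(H)}\leq\abs{E}$. From $q\leq\binom{p}{2}$ we get $q/(p-1)\leq p/2$, while trivially $q/(p-1)\leq\abs{E}/(p-1)$; balancing $p/2$ against $\abs{E}/(p-1)$ gives $p=O(\sqrt{\abs{E}})$ and hence $a(G)=O(\sqrt{\abs{E}})$, i.e.\ $a(G)^{2}=O(\abs{E})$. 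Substituting into the previous paragraph, $na^{2}\log n=O(n\abs{E}\log n)$, so $d=\Omega(n\abs{E}\log n)$ already suffices; I would state the probability there as $1-o(1)$ rather than $1-1/n$ only to absorb the Nash--Williams constant and small-$n$ corner cases, since for $n\geq20$ with the explicit constant from Claim~\ref{claim:graph} one still keeps $1-1/n$.

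I do not expect a real obstacle at the level of the corollary: granting Claim~\ref{claim:graph}, everything above is routine verification (that the forest decomposition is a legal $\mathcal{A}$, that $r=1/a$, that the probability and the hidden constant carry over), the only genuinely external ingredient being the standard Nash--Williams bound $a(G)=O(\sqrt{\abs{E}})$. The hard part lives one layer below, in Claim~\ref{claim:graph} itself: for a fixed edge $e\in E$ and non-edge $e'\in\overline{E}$ one must estimate the per-coordinate gap $\delta_{i}(e,e')$ after averaging over the sampled $A$, and in particular handle the samples with $e\notin A$, for which $\mathrm{RealizeTree}(A,f)$ treats $e$ like a non-edge. There both endpoints of $e$ and of $e'$ lie at $A$-distance at least $2$, so by the decaying-correlation estimate of Claim~\ref{claim:inprod-tree-second-nonedge} (extended to forests by declaring disconnected pairs to be at distance $\infty$, which is consistent with that formula) those contributions are second-order; showing that, once averaged, they do not cancel the first-order $\Omega(r/\sqrt{n})$ push coming from the at-least-$r$ fraction of samples that do contain $e$ is the delicate step. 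After that, a lower bound $\delta\defeq\inf\delta_{i}(e,e')=\Omega(r/\sqrt{n})$ together with the Chernoff-and-union-bound machinery of Lemma~\ref{lemma:edge-chernoff} --- verbatim from the tree case --- converts the gap into $d=\Omega\big((n/r^{2})\log n\big)$, which is Claim~\ref{claim:graph}.
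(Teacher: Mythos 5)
Your proposal is correct and follows essentially the same route as the paper: take $\mathcal{A}$ to be the $a$ edge-disjoint forests of an arboricity decomposition so that $r=1/a$, apply Claim~\ref{claim:graph} to get $d=\Omega(na^{2}\log n)$, and then use the bound $a=O(\sqrt{\abs{E}})$ (the paper cites $a\leq\lceil\sqrt{\abs{E}/2}\rceil$ where you rederive it from Nash--Williams) to obtain the weaker $d=\Omega(n\abs{E}\log n)$ form. The only difference is cosmetic, and your added remarks about where the real difficulty lies (inside Claim~\ref{claim:graph}) are accurate but not needed for this corollary.
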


\begin{proof}
Recall the definition of the arboricity (Definition~\ref{def:arboricity}). 
We can take $\mathcal{A}=\{\phi_i\}_{i=1}^a$ as the set of all edge-disjoint forests of $G$ so that $E=\union_i{\phi_i}$. The cardinality of $\mathcal{A}$ is the arboricity of $G$, and is denoted by $a\defeq a(G)$. The edge-disjointedness implies that every edge belongs to a unique forest $\phi_i$, and hence $r(e)=1/a$.  It follows that $d=\Omega(na^2\log n)$. In the worst case, $d=\Omega(n\abs{E}\log n)$ since $a\leq \left\lceil \sqrt{\abs{E}/2} \right\rceil$ using the bound in \cite{arboricity_bound}.
\end{proof}

It is easy to see that Theorem~\ref{thm:tree-second} is a special case of Corollary~\ref{thm:graph} because then $\mathcal{A}$ contains only one member, the tree $T$ itself.

\paragraph{A connection with effective resistance.} Suppose we take $\mathcal{A}$ as the set of all spanning trees of $G$. Using Fact~\ref{fact:effres}, we can see that $r(u,v)=\effres(u,v)$. This gives $d=\Omega\left( (n \log n)/\left(\min_e{\effres(e)}\right)^2 \right)$. Since $\effres(e) \geq 2/n$, $d = \Omega(n^3\log n)$ in the worst case. This bound is weaker than what we get if we use the arboricity in the proof of Corollary~\ref{thm:graph}.


\section{Realizing Random Graphs and Trees}\label{sec:lowerbound}

Let us make concrete the notion of ``linear separability'' which is at the center of our argument.
\begin{definition}[Linear Separability and Bipartition]\label{def:linsep}
Two point-sets $A, B\in \R{d}$ are \emph{linearly separable} (or \emph{separable} in short) if there exists a hyperplane with a normal vector $w$ such that $\inprod{a}{w} \leq \inprod{b}{w}$ for all $a\in A, b\in B$. A \emph{bipartition} of a point-set $B$ is a disjoint union of two convex subsets $B_1\djunion B_2=B$ where the subsets are separable.
\end{definition}

As we explained in Section~\ref{sec:intro-impossibility}, it is possible to cast the realization problem in Definition~\ref{prob:embed} as a question about separating two point-sets using a hyperplane. If $f$ is a $(d,s)$-map, the map $g$ from Section~\ref{sec:intro-impossibility} becomes 
\begin{align}
&g : V\times V \rightarrow \{0,s^2\}^d, \qquad g(u,v) = \left( w_i\left(f(u)_i-f(v)_i\right)^2 \right)_{i=1}^d \label{eq:g}
\, 
\end{align}
for every vertex pair $(u,v)\in E$ and $(u,u^\prime)\in \overline{E}$. 
Notice that the range of $g$ is $\{0,s^2\}^d$, which is the same as the Boolean hypercube scaled by $s^2$. 

If $G$ is a random \erdosrenyi graph, it would induce a random assignment on the points $g(V\times V)$ into convex sets $g(E)$ (imagine red) and $g(\overline{E})$ (imagine blue). 
Also note that the number of hyperplanes supported by $d$ points in the Boolean hypercube $C_d$ is bounded. 
This allows us to use a counting argument to show that with high probability in the randomness in $G$, the convex hulls of $g(E)$ and $g(\overline{E})$ will intersect if $d$ is ``small.''

The above argument does not depend on any structure in $G$ except that it is a random graph. 
Thus we can take $f$ to be arbitrary and allow the weights $w_i$ to be arbitrary reals. 

\begin{restatable}[Probabilistic version of Radon's Theorem]{theorem}{thmNonSeparable}
\label{thm:nonseparable}%
For any fixed $x,y \in \R{}$, let $B$ be an arbitrary subset of $\{x,y\}^{d}$. Create a uniformly random partition $E \djunion \overline{E}=B$ by independently setting $\Pr[b\in E]=\Pr[b\in\overline{E}]=1/2$ for every $b\in B$. If $\vert B \vert \geq 6d$, the convex hulls of $E$ and $\overline{E}$ intersect with probability at least $1-o(1)$.
\end{restatable}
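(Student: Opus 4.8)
The plan is to reduce the non-separability of the random partition $E \djunion \overline{E} = B$ to a counting argument over hyperplanes that are ``pressed against'' $d$ points of $B$, exactly as foreshadowed in Section~\ref{sec:intro-impossibility}. First I would establish the structural half: if $E$ and $\overline{E}$ \emph{are} linearly separable (their convex hulls do not intersect), then there exists a hyperplane that separates them and that can be pushed until it passes through (is ``pressed against'') some set of $d$ affinely independent points of $B$. This is the content of Proposition~\ref{prop:sep-imply-press}, which I would invoke: separation by a hyperplane $h_w$ with $\inprod{a}{w}\le\inprod{b}{w}$ for $a\in E$, $b\in\overline{E}$ can be tightened by translating $h_w$ toward the closest points and rotating so that it touches $d$ points of $B$ while keeping all of $E$ on one side and all of $\overline{E}$ on the other (or weakly on the boundary). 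Since $B \subseteq \{x,y\}^d$, after the affine change of coordinates $t \mapsto (t - x\constvec)/(y-x)$ we may as well assume $B \subseteq \{0,1\}^d$, so it suffices to bound the number of affine hyperplanes in $\R{d}$ determined by $d$ points of the Boolean cube.

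The second half is the probabilistic estimate. Fix any such ``pressing'' hyperplane $\pi$; it partitions $B$ into $B \cap \pi^{<}$, $B \cap \pi$, and $B \cap \pi^{>}$. For $\pi$ to actually witness a separation of the random partition, every point of $B \cap \pi^{<}$ must be monochromatic (all in $E$, say) and every point of $B \cap \pi^{>}$ the other color — the points on $\pi$ itself are free. Since each of the $|B| \geq 6d$ points is colored independently and uniformly, and at most $d$ of them lie on $\pi$ (here I would use that $\pi$ is determined by $d$ of the cube points, so it carries at most... actually the right bound is that $|B\setminus\pi| \geq |B| - d \geq 5d$ points are forced), the probability that $\pi$ is consistent with the random partition is at most $2 \cdot 2^{-(|B| - d)} \le 2^{1 - 5d}$. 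I would then take a union bound over all candidate pressing hyperplanes. The number of hyperplanes through $d$ points of $\{0,1\}^d$ is at most $\binom{2^d}{d} \le 2^{d^2}$, which is far too large; so the key refinement — and I expect this to be the main obstacle — is Proposition~\ref{prop:no-pressing}, which must give a much sharper count. The correct bound should exploit that the normal vector and offset of a hyperplane through $d$ cube points have small integer entries (by Cramer's rule, coordinates are ratios of $d\times d$ $\{0,1\}$-determinants, hence bounded by Hadamard's inequality, roughly $d^{d/2}$), so the number of \emph{distinct} relevant hyperplanes is at most something like $(c\sqrt{d})^{d(d+1)}$ — still potentially too big unless one is more careful, and the real saving is that we only care about the \emph{induced partition} of $B$, and the number of partitions of a fixed $|B|$-point set by hyperplanes is $|B|^{O(d)} \leq (6d)^{O(d)} \le 2^{O(d\log d)}$ by the standard VC / Sauer–Shelah bound for halfspaces in $\R{d}$.

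So the clean finish is: the number of distinct sign-patterns that halfspaces in $\R{d}$ induce on an $m$-point set is at most $O(m^d)$; with $m = |B| \geq 6d$ this is $2^{O(d\log d)}$; each fixed pattern arising from a pressing hyperplane forces at least $|B| - d \geq 5d$ of the independent fair coin flips, so it is realized by the random partition with probability at most $2^{1-5d}$; the union bound gives total failure probability at most $2^{O(d\log d)} \cdot 2^{1-5d} = o(1)$ once $d$ is large (and the constant $6$ in $|B|\geq 6d$ is chosen precisely so $5d$ beats the $O(d\log d)$... which it does not literally, so the genuine argument in Proposition~\ref{prop:no-pressing} must be the sharper determinant-based count giving a clean $2^{-\Omega(d\log d)}$ or the statement is only for $|B| \ge 6d$ with the entropy slack absorbed — I would follow the paper's Proposition~\ref{prop:no-pressing} for the exact bookkeeping). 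Combining Proposition~\ref{prop:sep-imply-press} (separation $\Rightarrow$ pressing hyperplane exists) with Proposition~\ref{prop:no-pressing} (no pressing hyperplane, whp) yields that $E$ and $\overline{E}$ are non-separable with probability $1-o(1)$, i.e., their convex hulls intersect, which is the claim.
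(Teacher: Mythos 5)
Your architecture is exactly the paper's: reduce separability to the existence of a hyperplane pressed against $d$ points (Proposition~\ref{prop:sep-imply-press}), then union-bound over such hyperplanes (Proposition~\ref{prop:no-pressing}). The per-hyperplane probability $2\cdot 2^{-(|B|-d)}\le 2^{1-5d}$ is also right. But you never close the union bound: you first count hyperplanes through $d$ points of the \emph{whole cube} $\{0,1\}^d$ (getting $\binom{2^d}{d}$, then a determinant-based count, both hopeless), then fall back on a dichotomy count that you evaluate as $|B|^{O(d)}=2^{O(d\log d)}$, correctly observe that this does not beat $2^{-5d}$, and explicitly defer to the paper for the bookkeeping. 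That is a genuine gap: the quantitative step that makes the theorem true is not carried out.

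The missing observation is already implicit in your own first paragraph: Proposition~\ref{prop:sep-imply-press} produces a hyperplane supported on $d$ points \emph{of $B$}, so the only candidates to enumerate are hyperplanes spanned by $d$-element subsets of $B$ itself --- at most $\binom{|B|}{d}$ of them, with no reference to the ambient cube. Writing $M=|B|=\alpha d$, this is $\binom{M}{d}\le (Me/d)^d=(\alpha e)^d=2^{O(d)}$, and each such hyperplane is consistent with at most $2\cdot 2^{d}$ colorings of $B$ (a factor $2$ for which side is labeled $E$, and $2^d$ for the unconstrained points lying on the hyperplane). The failure probability is therefore at most $2^{d+1}\binom{M}{d}/2^{M}\le 2(\alpha e)^d\,2^{-d(\alpha-1)}$, which at $\alpha=6$ is $2(12e/64)^d=o(1)$. (Your Sauer--Shelah route would also have worked had you used the sharp form $\sum_{i\le d}\binom{m}{i}\le (em/d)^d=(6e)^d$ rather than the loose $m^d$; the two counts agree up to constants in the exponent.) As written, however, the proposal does not contain a correct proof of the union-bound step.
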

A discussion and proof of Theorem~\ref{thm:nonseparable} is presented in Appendix~\ref{sec:lowerbound-proofs}. 

\begin{restatable}[Realizing a random graph]{corollary}{thmLowerbound}
\label{thm:lowerbound}%
Let $G\sim \mathcal{G}(n, 1/2)$ be an \erdosrenyi random graph with $n\geq 6$. Let $d\leq \binom{n}{2}/6$ be a positive integer. With probability at least $1-o(1)$ in the randomness of $G$, $G$ is not realizable under any $d$-map $f$ and any weights $w_i\in \R{}, i\in [d]$. This means $R(G,d)=0$.
\end{restatable}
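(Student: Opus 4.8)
The plan is to derive Corollary~\ref{thm:lowerbound} directly from Theorem~\ref{thm:nonseparable} by recasting realizability as linear separability and then counting the relevant point-set. First I would fix an \erdosrenyi random graph $G \sim \mathcal{G}(n,1/2)$ and an arbitrary (not necessarily random) $d$-map $f:V\rightarrow\{x,y\}^d$. Recall from Section~\ref{sec:intro-impossibility} and Equation~(\ref{eq:g}) that, with $w_i=1$ temporarily to define the point-set, the vectors $g(u,v) = \left((f(u)_i - f(v)_i)^2\right)_{i=1}^d$ lie in $\{0,s^2\}^d$; note that this is the crucial observation, namely that whatever real $f$ we pick, the squared-difference vectors land in a two-element-per-coordinate hypercube $\{0,s^2\}^d$, which is an affine/scaled copy of $\{0,1\}^d$. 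Realizing $G$ under $f$ with weights $w\in\RR^d$ and threshold $\theta$ means exactly that $\inprod{w}{g(u,v)} < \theta$ iff $(u,v)\in E$, i.e.\ that the ``red'' point-set $\{g(e) : e\in E\}$ and the ``blue'' point-set $\{g(e') : e'\in\overline{E}\}$ are strictly separated by the hyperplane with normal $w$; in particular their convex hulls are disjoint.

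Next I would pass from the $\binom{n}{2}$ pairs to the \emph{distinct} points they produce. Let $B \defeq g(V\times V) \subseteq \{0,s^2\}^d$ be the set of distinct squared-difference vectors. If two pairs $(u,v)$ and $(u',v')$ map to the same point of $B$ but receive different colors under $G$, then $G$ is trivially non-realizable, so assume this does not happen; then the coloring of $V\times V$ by $G$ descends to a well-defined red/blue coloring of $B$. The subtlety here is that Theorem~\ref{thm:nonseparable} wants a \emph{uniformly random independent} coloring of $B$, whereas the coloring induced by $G$ colors each point by whether \emph{some} preimage pair is an edge. I would handle this by the following monotonicity/coupling argument: it suffices to exhibit a subset $B_0 \subseteq B$ of size at least $6d$ on which the induced coloring \emph{is} uniformly random and independent, because if the convex hulls of the red and blue points \emph{within $B_0$} intersect, then so do the full red and blue hulls, and $G$ is non-realizable. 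To get such a $B_0$, pick one representative pair for each point of $B$ with all representatives vertex-disjoint — this is where I would need $n$ large enough relative to $d$ (we have $d \le \binom{n}{2}/6$, so we can afford roughly $\binom{n}{2}/6 \ge d$ disjoint pairs, and $\lfloor n/2\rfloor$ disjoint pairs always exist, giving at least $\min(\lfloor n/2\rfloor,\ |B|)$ usable independent points; one must check $6d$ of them survive). On vertex-disjoint pairs, the edge indicators of $G$ are mutually independent fair coins, so the induced coloring on the corresponding points of $B$ is exactly the uniformly random partition required by Theorem~\ref{thm:nonseparable}.

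With $|B_0| \ge 6d$ and the coloring on $B_0$ uniformly random, Theorem~\ref{thm:nonseparable} (applied with the two-element set $\{0,s^2\}$, or equivalently after the affine rescaling $z \mapsto z/s^2$ which preserves separability and convex hulls) gives that the red and blue convex hulls of $B_0$ intersect with probability $1-o(1)$ over the randomness of $G$. When they intersect, no hyperplane separates them, hence no $(w,\theta)$ satisfies the constraints~(\ref{eq:emst}) for the map $f$; since $f$ was arbitrary and $w_i$ were allowed to be arbitrary reals, $G$ is not realizable under any $d$-map and any real weights, i.e.\ $R(G,d)=0$, with probability $1-o(1)$. I expect the main obstacle to be the bookkeeping in the coupling step: ensuring that one can simultaneously (a) choose vertex-disjoint representative pairs, (b) guarantee at least $6d$ of them, and (c) argue cleanly that the induced coloring on these points is genuinely i.i.d.\ fair — together with the degenerate case where two same-colored-requirement pairs collide to one point, which must be dispatched as an immediate non-realizability witness. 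The hypothesis $d \le \binom{n}{2}/6$ is precisely what is needed to make the counting $|B_0| \ge 6d$ go through, and $n\ge 6$ ensures $\lfloor n/2\rfloor \ge 6d$ is not vacuous in the smallest cases.
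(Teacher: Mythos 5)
Your overall route---recast realizability as strict linear separation of the squared-difference points $g(u,v)\in\{0,s^2\}^d$, note that rescaling by $1/s^2$ preserves convex hulls and separability, and invoke Theorem~\ref{thm:nonseparable}---is the paper's route, and your treatment of collisions (two pairs mapping to the same point of $B$ with different colors is an immediate non-realizability witness) is a sensible patch for a point the paper glosses over. The problem is your coupling step.

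You restrict to a subset $B_0$ whose representative pairs are pairwise vertex-disjoint, on the grounds that only then are the edge indicators independent. This is both unnecessary and fatal. It is unnecessary because in $\mathcal{G}(n,1/2)$ the indicators $\mathbf{1}[(u,v)\in E]$ are, by definition of the model, mutually independent fair coins over \emph{all} $\binom{n}{2}$ pairs, whether or not the pairs share a vertex; the coloring of the full indexed family $\bigl(g(u,v)\bigr)_{(u,v)}$ is therefore already exactly the uniformly random partition that Theorem~\ref{thm:nonseparable} needs (and the underlying count in Proposition~\ref{prop:no-pressing} only requires $M$ indexed points with $2^M$ equally likely labelings). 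It is fatal because a graph on $n$ vertices admits at most $\lfloor n/2\rfloor$ pairwise vertex-disjoint pairs, so $\lvert B_0\rvert\leq n/2$, and the requirement $\lvert B_0\rvert\geq 6d$ then forces $d\leq n/12$. Your argument thus proves non-realizability only for $d=O(n)$, not for the claimed range $d\leq\binom{n}{2}/6=\Theta(n^2)$; the closing remark that ``$n\geq 6$ ensures $\lfloor n/2\rfloor\geq 6d$ is not vacuous'' cannot be repaired for $d$ anywhere near the quadratic threshold. (You may be importing the caution appropriate to Theorem~\ref{thm:lowerbound-tree}, where the edge indicators of a random spanning tree genuinely are dependent; there the paper modifies the counting over Cayley's $n^{n-2}$ trees rather than thinning the point set.) Dropping the vertex-disjointness detour and applying Theorem~\ref{thm:nonseparable} to all $\binom{n}{2}$ points at once recovers the paper's proof.
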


\begin{proof}
Sample an \erdosrenyi random graph $G=(V,E)\sim \mathcal{G}(n,1/2)$ where $n\geq 6$. 
Also, let  $f:V\Mapsto \{x,y\}^{d}$ be an arbitrary map with $d\leq \binom{n}{2}/6$ and $x,y\in \R{}$. 
Since $E$ is a uniformly random subset of $V\times V$, we can invoke Theorem~\ref{thm:nonseparable} to show that with high probability, the random partition $E\djunion \overline{E}$ of the map $B=g(V\times V)$ is not linearly separable. 
Consequently, there exists no hyperplane (indicated by $w\in \R{d}$) that separates $E$ from $\overline{E}$. 
Recall that our definition of linear separability has inequality constraints. 
If these constraints cannot be satisfied by any hyperplane, it follows that the strict inequality constraints of Equation~(\ref{eq:emst}) cannot be satisfied either. 
Therefore, the random graph $G$ is not realizable by any $w$ under any map $f$. 
The randomness in this argument comes from the randomness in $G$. 
Hence the quantity $R(G,d)$ in Definition~\ref{prob:embed} would be zero. 
\end{proof}

Corollary~\ref{thm:lowerbound} uses a map $f$ that is not necessarily random. It also allows negative weights. Thus it disallows even a generalization of the context of Problem~\ref{prob:embed}.

By making a small modification in the counting argument mentioned above, it is possible to show that with high probability in sampling the tree, a random spanning tree of the complete graph on $n$ vertices cannot be realized if $d\leq n/2$.

\begin{restatable}[Realizing a random tree]{theorem}{thmLowerboundTree}
\label{thm:lowerbound-tree}%
Let $\mathcal{T}$ be the uniform distribution on the spanning trees of the complete graph $K_n$ with $n\geq 17$. Sample a tree $T=(V,E)\sim \mathcal{T}$. Let $d\leq n/2=O(n)$ be a positive integer. With probability at least $1-o(1)$ in the randomness of $T$, $T$ is not realizable under any map $V\Mapsto \{x,y\}^d$ for arbitrary $x,y\in \R{}$ and any weights $w_i\in \R{}, i\in [d]$. This means $R(T,d)=0$.
\end{restatable}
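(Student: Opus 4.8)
The plan is to mirror the argument behind Corollary~\ref{thm:lowerbound}, but to account for the fact that a uniformly random spanning tree of $K_n$ is \emph{not} a uniformly random subset of $V \times V$: the edge set $E$ has fixed size $n-1$ and is heavily structured. So the first task is to reduce the statement to a randomized-coloring setup to which a counting argument à la Theorem~\ref{thm:nonseparable} applies. Concretely, fix an arbitrary map $f : V \to \{x,y\}^d$ and consider the induced point-set $g(V\times V) \subseteq \{0,s^2\}^d$ as in~(\ref{eq:g}). Realizability of $T$ under $f$ requires a hyperplane strictly separating the ``red'' points $g(E)$ from the ``blue'' points $g(\overline E)$; as in the proof of Corollary~\ref{thm:lowerbound}, it suffices to show that with high probability over $T \sim \mathcal{T}$ the two convex hulls intersect, for \emph{every} fixed $f$. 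Since there are only finitely many maps $f$ (indeed at most $2^{dn}$ of them up to the relevant equivalence), and $d \le n/2$, a union bound over $f$ will be affordable provided the failure probability for each fixed $f$ is exponentially small in $n$; so I would aim for a bound of the form $\exp(-\Omega(n))$ per map.

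For a fixed $f$, I would bound the number of hyperplanes in $\RR^d$ that could possibly do the separating. Following the outline in Section~\ref{sec:intro-impossibility} and the structure of Proposition~\ref{prop:sep-imply-press} / Proposition~\ref{prop:no-pressing}, any separating hyperplane can be taken to be ``pressed against'' $d$ of the points of $g(V\times V)$, so the number of candidate separators is at most $\binom{|V\times V|}{d} \le \binom{n^2}{n/2}$, which is $2^{O(n\log n)}$. For each fixed candidate hyperplane $h$, the event that $h$ correctly separates $g(E)$ from $g(\overline E)$ forces the red/blue labels of the $\binom n2$ points to be consistent with the sides of $h$ — i.e.\ it constrains which pairs may be edges of $T$. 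The key quantity is then $\Pr_{T\sim\mathcal T}[\,T \text{ uses only ``red-side'' pairs and avoids all ``blue-side'' pairs}\,]$. If I can show this probability is $2^{-\omega(n\log n)}$ for every hyperplane $h$ whose red side and blue side are both nonempty, then a union bound over hyperplanes and over maps $f$ closes the argument.

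To control that probability I would use the structure of uniform spanning trees. A hyperplane partitions the $\binom n2$ candidate edges into a ``must-be-edges-only'' set $R$ and a forbidden set $B = \binom V2 \setminus R$; the event is exactly $\{E(T) \subseteq R\}$, i.e.\ $T$ is a spanning tree of the graph $H = (V,R)$. By the Matrix–Tree theorem (or, more cheaply, by Cayley-type counting / the effective-resistance Fact~\ref{fact:effres}), $\Pr[E(T)\subseteq R] = \tau(H)/\tau(K_n) = \tau(H)/n^{n-2}$. When $B$ is nonempty but small this ratio is still close to $1$, so the real content is that the counting argument (Proposition-style bound on the number of hyperplanes) only produces hyperplanes for which $B$ is forced to be \emph{large} — on the order of a constant fraction of $\binom n2$ — once $d \le n/2$; then $\tau(H) \le$ (number of spanning trees of a graph missing $\Omega(n^2)$ edges) is smaller than $n^{n-2}$ by a factor $e^{-\Omega(n\log n)}$, beating the $2^{O(n\log n)}$ union bound. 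Making this last step quantitative — showing that every relevant separating hyperplane forces enough forbidden pairs, and bounding $\tau(H)$ accordingly — is the part I expect to be the main obstacle, and it is presumably where the specific threshold $n \ge 17$ and $d \le n/2$ come from; I would handle it by combining the pressing-hyperplane count with a convexity/dimension argument (a hyperplane pressed against only $d \le n/2$ of the $g$-points cannot ``carve out'' a red side rich enough to contain a spanning tree unless its blue side is correspondingly thin, and then the parity/product constraint $\prod_{e\in C}\sigma_i(e)=1$ along cycles, noted in Section~\ref{sec:graph}, limits how the red side can look), and then invoking the Matrix–Tree bound on $\tau(H)$.
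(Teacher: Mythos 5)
Your high-level plan---redo the counting of Proposition~\ref{prop:no-pressing} but measure the separable colorings against the $n^{n-2}$ spanning trees of $K_n$ instead of against all $2^{M}$ colorings of $B$, $M=\binom{n}{2}$---is exactly the paper's strategy. But the step you yourself flag as the main obstacle is where the proposal breaks, and the fix is much simpler than what you propose. The paper never estimates a per-hyperplane probability at all: a fixed $d$-supported hyperplane separates at most $2^{d+1}$ colorings of $B$ in total (two orientations times $2^{d}$ label choices for the $d$ supporting points), hence at most $2^{d+1}$ \emph{tree}-colorings, so the number of separable trees is at most $2^{d+1}\binom{M}{d}$ and the probability is at most $2^{d+1}\binom{M}{d}/n^{n-2}$, which is $o(1)$ for $d\le n/2$ and $n\ge 17$. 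Your route instead wants $\Pr_T[E(T)\subseteq R]=\tau(H)/n^{n-2}$ to be $2^{-\omega(n\log n)}$ for every relevant hyperplane because the forbidden set has size $\Omega(n^2)$. That quantitative claim is false: a graph missing a constant fraction of its $\binom{n}{2}$ edges can still have $n^{n-2}e^{-O(n)}$ spanning trees (e.g.\ $\E_{G\sim\mathcal{G}(n,p)}[\tau(G)]=p^{n-1}n^{n-2}$, so some graph with only $p\binom{n}{2}$ edges attains this), and a loss of only $e^{-O(n)}$ cannot beat the $\binom{M}{d}=2^{\Theta(n\log n)}$ union bound over hyperplanes. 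The observation you are missing points the other way: if $h$ separates $g(E(T))$ from $g(\overline{E(T)})$, then every pair strictly on the red side must be an edge of $T$, so $E(T)$ is sandwiched between the strictly-red set (of size at most $n-1$) and that set plus the $\le d$ supporting points; hence each oriented hyperplane is consistent with at most $2^{d}$ trees. No Matrix--Tree theorem, effective resistance, or cycle-sign argument is needed.

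A secondary issue is your proposed union bound over the maps $f$: there are up to $2^{(n-1)d}=2^{\Theta(n^2)}$ essentially distinct maps when $d=n/2$, so a per-map failure probability of $\exp(-\Omega(n))$---or even the $2^{-\Theta(n\log n)}$ that the counting argument actually delivers---does not close that union bound. The paper does not attempt it either: its proof (like that of Corollary~\ref{thm:lowerbound}) fixes an arbitrary $f$ first and proves $\Pr_T[\text{not realizable under }f]\ge 1-o(1)$ for each such $f$, i.e.\ the statement should be read with that order of quantifiers.
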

The proof is presented in Appendix~\ref{sec:lowerbound-proofs}.

\section{Conclusions}\label{sec:conclusions}
We defined a graph realization problem on random points and gave two algorithms, one for realizing graphs and the other for trees. 
We also proved that realizing random graphs requires a large target dimension.

\paragraph{Future work.}
Our realizing algorithms do not directly take advantage of any local or global structure of the tree/graph. 
The graph-realization algorithm samples from a family $\mathcal{A}$ of acyclic subgraphs; the ensemble of subgraphs has a bearing on the final bound. 
It is possible that we would get improved bounds if we focus on graphs with a certain combinatorial property, such as path graphs, planar graphs, etc. 
The effective resistance---a quantity intimately related to many algebraic properties of a graph---has appeared in our analysis. 
It would be interesting to see if one can design realization algorithms directly based on algebraic properties of the graph.

There could be graphs which need a higher target dimension than the $\Omega(n^2)$ bound from the random graphs. 
In general, it is an interesting prospect to reduce the necessity-sufficiency gap which currently stands at $\Omega(n\log n)$ vs. $\Omega(n)$ for trees and $\Omega(na^2\log n)$ vs. $\Omega(n^2)$ for graphs.

We have already seen that our algorithms work for weighted $\ell_1$ norms as well as weighted $\ell_2$ norms. 
Which other metric can we work with? Mahalanobis distance, perhaps, is a good candidate. 
An intriguing question is whether we can replace the ``uniformly random points'' in our problem with points generated from other stochastic processes. 
It is not obvious at this point how one can devise an algorithm for such a scenario. 
It is conceivable that the current analysis would work even if the map $f$ contains (sub-)Gaussian entries, but it still needs to be worked out.
At last but not the least, it is natural to ask how the bound on $d$ depends on the entropy of the data points.

\section{Acknowledgments}
We thank Ion Mandoiu for introducing to us the realization problem in the \cite{lp_tree} paper. 
We also thank Benjamin Fuller and Donald Sheehy for discussions and feedback which greatly improved the quality of the manuscript.
At last but not the least, Saad Quader would like to thank Tazrian Shinjon for her insightful comments.

\let \v \oldv

\bibliographystyle{alpha}
\bibliography{tree}

\appendix
\section{Omitted Proofs for Realizing Trees}\label{sec:tree-proofs}

\thmEdgeChernoff*
\begin{proof}
Let $X=D(u,v)$ and $Y=D(u^\prime, v^\prime)$. The random variables $X$ and $Y$ are sums of $d$ independent sub-Gaussian components, each component taking values in the interval $[0,s^2]$ of width $s^2$.

First, we want to show that $Y-X>0$ with high probability. Since Equation~(\ref{eq:gap-total}) tells us $\E (Y-X)=\sum{\delta_i}\geq d\delta$, it suffices to show that $(Y-\E Y) > \theta > (X+\E X)$ where $\theta=d\delta/2$.

Let $\mathcal{H}_{u,v}$ be the event that for an arbitrary edge $(u,v)\in E$, $X$ is ``too small'' compared to its expectation. Then, by Hoeffding's tail inequality, we have $
\Pr \mathcal{H}_{u,v} 
= \Pr \{\E X - X >\theta \} 
< \exp\left(-2\theta^2/\left(\sum_{i\leq d}{(s^2)^2}\right)\right)
= e^{-2\theta^2/s^4d}
= e^{-d\delta^2/2s^4}
$. Similarly, let $\mathcal{H}_{u,u^\prime}$ be the event that for an arbitrary non-edge $(u,u^\prime) \in \overline{E}$, $Y$ is ``too large'' compared to its expectation. In this case, we get $\Pr \mathcal{H}_{u,u^\prime} = \Pr \{Y - \E Y >\theta \} < e^{-d\delta^2/2s^4}$.

Now, a bad event $\mathcal{B}$ is one of the above two events for some vertex pair in $V\times V$. We want to show that the probability of this event is at most an inverse polynomial in $n$. Using a union bound over the $(n-1)$ tree edges and the remaining non-tree edges, we get
$
\Pr \mathcal{B}
\leq n^2 e^{-d\delta^2/2s^4} 
= \exp\left( 2\log n - d\delta^2/2s^4 \right)
$.

This probability will be at most $1/n = e^{-\log n}$ if $-2\log n + d\delta^2/2s^4 \geq \log n$, giving us \[d \geq \frac{6s^4\log n}{\delta^2}.\]

\end{proof}

\thmNonEdge*
\begin{proof}

Let $f:V\Mapsto \{x,y\}^d$. Consider the following process of generating the values $\{f(u)\}$: Select a set of $p_i\abs{E}$ edges uniformly at random out of the all possible $p_i\abs{E}$-element subsets of $E$. Set $\sigma(e)=1$ for these edges, and set $\sigma(e)=0$ for the remaining edges. Arbitrarily select a vertex $u$, and set $f(u)\in \{x,y\}^d$ uniformly at random. Set any unassigned vertex values as follows: for each edge $e=(a,b)\in T$, set $f(b)_i\gets f(a)_i$ if $\sigma(a,b)=1$, and set $f(b)_i\gets \overline{f(a)_i}$ otherwise where $\overline{x}=y$ and $\overline{y}=x$.

Notice that the distribution of $\{f(u)_i\}$ generated by the above process is identical to the observed distribution of $\{f(u)_i\}$. The good thing about this process is that the ``edge signs'' $\sigma$ have i.i.d. Bernoulli distribution with parameter $p_i$.

Let $P$ be the unique path from $u$ to $v$ along $T$, whose length is $t$. Let $S_t=\sum_{e\in P}{\sigma(e)}$. Define $c(t)\defeq \Pr[S_t\text{ is even }]$. Since $S_t$ has a binomial distribution with parameters $(t,p_i)$, it is not hard to show that $c(t)=\left(1+(2p_i-1)^t\right)/2$. Since $c(t)$ also equals $\PrAgree(u,v,i)$ conditioned on $d_T(u,v)=t$, the claim follows.
\end{proof}

\thmDeltai*
\begin{proof}
Fix coordinate $i$. Let $\epsilon\defeq \epsilon_i=p_i-p$ where $p_i$ is the fraction of agreeing edges at coordinate $i$. Substituting $p_i=p+\epsilon$ in Equation~(\ref{eq:deltai-simplified}) gives us $
\delta_i
\geq s^2q(2p+2\epsilon_i-1)(1-p-\epsilon_i)
=qs^2(2p-1)(1-p) + \lambda(p,\epsilon)$ where $
\lambda(p,\epsilon)
=qs^2\epsilon(3-4p-2\epsilon)$. It follows that $\delta_i$ is at least $qs^2(2p-1)(1-p)$ as long as $\lambda(p, \epsilon)\geq 0$. Since both $q$ and $s$ are strictly positive, this inequality gives us $
\epsilon 
\leq 1/2-2\alpha/\sqrt{n}
$ where we used $p=1/2+\alpha/\sqrt{n}$. This condition is equivalent to requiring $
p_i
\leq 1-\alpha/\sqrt{n}$. Recall that in Algorithm~\ref{alg:tree-second}, we have put a stronger requirement that $p_i$ must fall within the interval $[p,3/4]$ for $w_i$ to be $1$. Now we have to estimate $
q
=\Pr[w_i=1]
=\Pr[p\leq p_i \leq 3/4]$ which ensures $\delta_i\geq qs^2(2p-1)(1-p)$.

Let $Z$ be a random variable with a binomial distribution $B(n-1,1/2)$. 
Let $a \in [0,1]$, and define $\Tail(a) \defeq \Pr[Z \geq (n-1) a]$. 
According to Proposition~\ref{prop:anticonc-binomial}, $\Tail(p) > 1/2 - \alpha$. 
However, $\Tail(p) = q + \Tail(3/4)$, which implies $q \geq (1/2-\alpha) - \Tail(3/4)$. 

\begin{fact}
For any positive integer $n$ and $\beta \in [0,1/2]$ such that $\beta n$ is an integer,
\[
\sum_{k=0}^{\beta n}{\binom{n}{k} }
\leq 2^{H(\beta)n}
\, ,
\]
where $H(\beta)$ is the binary entropy function defined as $H(x)=-x\log_2x-(1-x)\log_2(1-x)$ for $x\in [0,1]$.
\end{fact}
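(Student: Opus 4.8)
The plan is to derive the bound from the binomial theorem applied to $1 = (\beta + (1-\beta))^n$, which packages the binomial coefficients into a probability distribution of total mass one. First I would expand and then discard the non-negative tail:
\[
1 = \bigl(\beta + (1-\beta)\bigr)^n = \sum_{k=0}^{n}\binom{n}{k}\beta^k(1-\beta)^{n-k} \;\geq\; \sum_{k=0}^{\beta n}\binom{n}{k}\beta^k(1-\beta)^{n-k}\,.
\]

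The key step is a uniform lower bound on the surviving weights: for every $k$ with $0 \leq k \leq \beta n$ one has $\beta^k(1-\beta)^{n-k} \geq 2^{-H(\beta)n}$. To see this, write $\beta^k(1-\beta)^{n-k} = (1-\beta)^n\bigl(\frac{\beta}{1-\beta}\bigr)^{k}$; since $\beta \leq 1/2$ we have $\frac{\beta}{1-\beta} \leq 1$, so the map $k \mapsto \bigl(\frac{\beta}{1-\beta}\bigr)^{k}$ is non-increasing and is therefore minimized over the integers in $[0,\beta n]$ at $k = \beta n$. Hence
\[
\beta^k(1-\beta)^{n-k} \;\geq\; (1-\beta)^n\Bigl(\frac{\beta}{1-\beta}\Bigr)^{\beta n} = \beta^{\beta n}(1-\beta)^{(1-\beta)n} = 2^{-H(\beta)n}\,,
\]
where the last equality is just $H(\beta) = -\beta\log_2\beta - (1-\beta)\log_2(1-\beta)$ rewritten as an exponent of $2$.

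Plugging this into the previous display gives $1 \geq 2^{-H(\beta)n}\sum_{k=0}^{\beta n}\binom{n}{k}$, and multiplying through by $2^{H(\beta)n}$ yields the stated inequality. The extreme cases are automatic: at $\beta = 0$ both sides equal $1$, and at $\beta = 1/2$ the claim reduces to the trivial bound $\sum_{k=0}^{n/2}\binom{n}{k} \leq 2^n$. I do not expect any genuine obstacle here; the only subtlety is that the direction of monotonicity — and hence the entire argument — hinges on $\beta \leq 1/2$, so I would flag that hypothesis explicitly at the point where it is used.
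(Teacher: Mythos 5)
Your proof is correct and complete: the paper states this fact without proof (it is the standard entropy bound on the volume of a Hamming ball), and your argument — expanding $1=(\beta+(1-\beta))^n$, discarding the upper tail, and uniformly lower-bounding each surviving weight by $2^{-H(\beta)n}$ via the monotonicity of $k\mapsto\bigl(\tfrac{\beta}{1-\beta}\bigr)^k$ for $\beta\leq 1/2$ — is exactly the canonical derivation. You correctly flag that the hypothesis $\beta\leq 1/2$ is what makes the monotonicity (and hence the whole bound) go through, and the edge cases are handled.
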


Therefore, $
\Tail(3/4)
=\sum_{k=3n/4}^{n}{\binom{n}{k} }
=\sum_{k=0}^{n/4}{\binom{n}{k} }
\leq 2^{(H(1/4)-1)n}
\leq 2^{-0.18n}$ since $H(1/4)\leq 0.82$. Consequently, $
q
\geq (1/2-\alpha) - 2^{-0.18n }$. This value of $q$ is accompanied by $
\delta_i
\geq q s^2 (2p-1) (1-p)
=q s^2 (2\alpha/\sqrt{n}) (1-\alpha/\sqrt{n})$.
\end{proof}


\begin{restatable}[Anti-concentration]{proposition}{thmAnticonc}{
\label{prop:anticonc-binomial}
Let $n \geq 3$. 
Let $Z$ be a random variable with the binomial distribution $B(n-1,1/2)$. 
Suppose $\alpha \in (0, 1/2)$. 
Then 
\[
\left(\frac{1}{2}-\alpha \right)
< \Pr\left[ Z \geq \E Z + \alpha \sqrt{n} \right] 
< \frac{1}{2}
\, .
\]
}
\end{restatable}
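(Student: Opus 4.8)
The plan is to reduce both inequalities to a single anti-concentration statement about a symmetric window around the mean, and then estimate that window's mass by counting lattice points weighted by the largest binomial point mass. Write $\mu \defeq \E Z = (n-1)/2$. The involution $k \mapsto (n-1)-k$ is measure-preserving for $B(n-1,1/2)$ and fixes $\mu$, so $\Pr[Z \ge \mu + \alpha\sqrt n] = \Pr[Z \le \mu - \alpha\sqrt n]$. Since $\alpha\sqrt n > 0$, the three events $\{Z \ge \mu + \alpha\sqrt n\}$, $\{Z \le \mu - \alpha\sqrt n\}$, $\{\,|Z-\mu| < \alpha\sqrt n\,\}$ partition the sample space, whence
\[
\Pr\!\left[Z \ge \E Z + \alpha\sqrt n\right] \;=\; \frac{1 - p}{2}, \qquad p \;\defeq\; \Pr\!\left[\,|Z - \mu| < \alpha\sqrt n\,\right].
\]
Thus the claimed two-sided bound is \emph{equivalent} to $0 < p < 2\alpha$. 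The left inequality $p > 0$ (which gives the strict upper bound $\Pr[Z \ge \E Z + \alpha\sqrt n] < 1/2$) holds exactly when the open interval $(\mu - \alpha\sqrt n,\, \mu + \alpha\sqrt n)$ contains an integer: immediate for $n$ odd since $\mu \in \Z$, and for $n$ even it follows from $\alpha\sqrt n > 1/2$ (the integer nearest $\mu$ is at distance $1/2$), which is comfortably met in the regime $\alpha = 1/4$, $n \ge 20$ in which the proposition is used.

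For the lower bound I would prove $p < 2\alpha$ by bounding $p$ by (number of integers in the window) times (largest point mass of $B(n-1,1/2)$). The window has length $2\alpha\sqrt n$, so it contains at most $\lceil 2\alpha\sqrt n\rceil$ integers — and, since $\mu$ is a half-integer when $n$ is even, the exact count $2\lceil \alpha\sqrt n - 1/2\rceil$ is smaller still. For the point masses I would invoke the Littlewood--Offord / \erdos{} estimate $\binom{N}{\lfloor N/2\rfloor}2^{-N} \le \sqrt{2/(\pi N)}$, valid for all $N \ge 1$ and in particular $N = n-1$, so each lattice point contributes at most $\sqrt{2/(\pi(n-1))}$. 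Combining these gives $p \lesssim 2\alpha\sqrt{2/\pi}\approx 1.6\,\alpha$ plus an $O(1/\sqrt n)$ correction, which is below $2\alpha$ once $n$ exceeds a modest threshold; for the finitely many smallest admissible $n$ one verifies $p < 2\alpha$ by summing the relevant binomial coefficients directly.

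The conceptual core — symmetry together with a tight bound on the central binomial coefficient — is short; the real work is the constant bookkeeping in the last step. The crude estimate ``number of lattice points $\le$ length $+\,1$'' is too lossy for small $n$, so one must count the integers in $(\mu - \alpha\sqrt n,\, \mu + \alpha\sqrt n)$ exactly (treating $n$ even and $n$ odd separately), and near the smallest $n$ even bounding every atom by the modal one is not quite enough, so one sums the actual binomial coefficients (using that the atoms flanking the mode are already slightly smaller). This is the step I expect to be the main obstacle; the very same sharpness is what forces the window to contain a lattice point, which is exactly what makes the upper bound strict rather than merely $\le 1/2$.
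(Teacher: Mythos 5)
Your argument is, at its core, the same as the paper's: reduce by symmetry to the mass of a window around the mean, then bound that mass by (number of lattice points in the window) times the central atom of $B(n-1,1/2)$, estimated via the central binomial coefficient. Your version is cleaner in two respects: you use the non-asymptotic bound $\binom{N}{\lfloor N/2\rfloor}2^{-N}\le\sqrt{2/(\pi N)}$ where the paper writes Stirling with an ``$\approx$'', and you make the symmetry reduction $\Pr[Z\ge \E Z+\alpha\sqrt n]=(1-p)/2$ explicit, which correctly exposes that the strict upper bound $<1/2$ requires the window to contain a lattice point.

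The gap is the step you defer. You claim $p<2\alpha$ holds ``once $n$ exceeds a modest threshold,'' with the remaining cases handled by a finite check over small $n$. That cannot be made to work, because the failure regime is not small $n$ but small $\alpha$ relative to $1/\sqrt n$: the $+1$ in the lattice-point count contributes $\Theta(1/\sqrt n)$ to $p$, which must be compared against $2\alpha$, so the problematic pairs $(n,\alpha)$ are unbounded in $n$. In fact the proposition as stated is false. For even $n$ and $\alpha$ slightly above $1/(2\sqrt n)$ the window contains exactly the two integers flanking the half-integer mean and $p\approx 2\sqrt{2/(\pi n)}>2\alpha$; concretely, $n=100$, $\alpha=0.06$ gives $\Pr[Z\ge 50.1]=1/2-\binom{99}{50}2^{-99}\approx 0.420<0.44=1/2-\alpha$, violating the left inequality, while $n=4$, $\alpha=0.1$ gives $\Pr[Z\ge 1.7]=1/2$ exactly, violating the right one. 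So no amount of bookkeeping completes your last step for all $\alpha\in(0,1/2)$ and $n\ge 3$. To be fair, the paper's own proof has the identical defect (it bounds a sum of $\lfloor\alpha\sqrt n\rfloor+1$ terms by $\alpha\sqrt n$ times the maximal term), and the proposition is only ever invoked with $\alpha=1/4$ and $n\ge 20$, where both arguments do go through. The honest repair is to add a hypothesis bounding $\alpha\sqrt n$ away from $0$ (or to fix $\alpha=1/4$, $n\ge 20$); under such a hypothesis your exact even/odd lattice-point count together with the $\sqrt{2/(\pi(n-1))}$ bound closes the argument.
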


\begin{proof}
It is easy to see that $\Pr[Z > \E Z + \alpha \sqrt{n}]$ is less than $1/2$ since the volume of a ``proper'' tail cannot exceed $1/2$.

Note that the peak of a binomial distribution remains relatively flat for small deviations from the mean.
The area under the pmf curve in that region can be closely overestimated by a (slightly larger) rectangle. 
This rectangle will have width $\alpha \sqrt{n}$ and height ${n \choose {n/2}}$ where $\sigma^2=n/4$ is the variance of a binomial distribution $B(n,1/2)$ and $\alpha$ is a small positive constant. 
We want to show that the mass in the tail beyond $n/2+\alpha \sqrt{n}$ is \emph{larger} than a constant. Let $N = n-1$.

\begin{align*}
q
&= \Pr[S_N \geq N/2 + \alpha \sqrt{n}] \\
&= 1/2 - \sum_{k=N/2}^{N/2 + \alpha\sqrt{n}}{ \frac{ {N \choose k } }{2^N} } \\
&> 1/2 -  (\alpha \sqrt{n}){N \choose {N/2} }2^{-N} \\
&\approx 1/2 - (\alpha \sqrt{n}) \left[ \frac{\sqrt{2}}{\sqrt{\pi}} \frac{2^N}{\sqrt{N} } \right] 2^{-N} \qquad \text{ (Stirling)} \\
&= 1/2 - \alpha  \sqrt{2/\pi} \sqrt{n/N} \\
&> 1/2 - \alpha \qquad \text{for } n \geq 3
\, .
\end{align*}
\end{proof}



\section{Omitted Proofs for Realizing Graphs}\label{sec:graph-proofs}
\claimGraph*

\begin{proof}
Let us use $\alpha = 1/4$ when invoking Algorithm~\ref{alg:tree-second}. 
In Algorithm~\ref{alg:tree-second}'s context, let $p=1/4+1/4\sqrt{n}$, and $q=\Pr[w_i=1]$. 
For every edge $e=(u,v)\in G$, let 
\[r(e) \defeq \Pr_{A}[e\in A],\] and $t$ be the length of the unique $u$-$v$ path in $A$ if it exists, and $\infty$ otherwise. Notice that $t=1$ if $(u,v)\in A$, and $t_e\geq 2$ otherwise. Define the quantity $\PrAgree(e,i)$ for an arbitrary edge $e=(u,v)\in G$ conditioned on the event that $w_i=1$.

\begin{align*}
\PrAgree(e,i\vert e\in G)
&= r(e) \PrAgree(e,i\vert e\in A) + (1-r(e)) \PrAgree(e,i\vert e\not\in A)\\
&= r(e) \left( \PrAgree(e,i\vert e\in A) - \PrAgree(e,i\vert e\not\in A) \right) + \PrAgree(e,i\vert e\not\in A)
\end{align*}
This implies,\begin{align*}
&\PrAgree(e,i\vert e\in G) - \PrAgree(e,i\vert e\not \in G)\\
&= r(e) \left( \PrAgree(e,i\vert e\in A) - \PrAgree(e,i\vert e\not\in A) \right) + \PrAgree(e,i\vert e\not\in A) - \PrAgree(e,i\vert e\not \in G)\\
&= r(e) \left( \PrAgree(e,i\vert e\in A) - \PrAgree(e,i\vert e\not\in A) \right)
\end{align*}
since conditioned on any $A$, the last two terms are the same. Continuing, the above quantity equals \begin{align*}
&= r(e) \left([1+(2p_i-1)]/2 - [1+(2p_i-1)^t]/2\right) \text{ using Claim~\ref{claim:inprod-tree-second-nonedge}}\\
&\geq r(e) \left((2p_i-1) - (2p_i-1)^2\right)/2 \text{ since }t\geq 2\\
&=r(e) \Omega(1/\sqrt{n}) \text{  by Claim~}\ref{claim:deltai}\\
&= \Omega(r/\sqrt{n}) \text{  where } r=\min_{e}{r(e)}.
\end{align*}

This implies $\delta_i=\Omega(r/\sqrt{n})$ from Definition~\ref{def:deltai} when $s$ is $O(1)$ and $n\geq 20$. By an application of Lemma~\ref{lemma:edge-chernoff}, we see that $d=\Omega(\dfrac{n}{r^2}\log n)$ is sufficient to realize $G$.
\end{proof}

\section{Omitted Proofs for Random Graphs}\label{sec:lowerbound-proofs}
A well-known theorem in convex geometry is Radon's theorem, which relates the linear separability of point-sets with the ambient dimension. It states that it is always possible to label any collection of at least $d+2$ points in $\R{d}$ into two subsets which are not linearly separable.

\begin{theorem}[Radon's Theorem]\label{thm:radon}
If $B$ is a set of $M$ points in $\R{d}$ with $M\geq d+2$, there exists a partition $B=E\djunion \overline{E}$ such that the convex hulls of $E$ and $\overline{E}$ has nonempty intersection. Consequently, there can be no hyperplane separating $E$ from $\overline{E}$.
\end{theorem}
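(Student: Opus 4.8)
The plan is to prove Radon's theorem by the classical affine‑dependence argument. First I would note that any $M \geq d+2$ points $x_1,\dots,x_M \in \R{d}$ are affinely dependent: the homogeneous linear system consisting of the $d$ coordinate equations $\sum_{j=1}^M \lambda_j x_j = \vec{0}$ together with the single normalization equation $\sum_{j=1}^M \lambda_j = 0$ has $d+1$ equations in $M \geq d+2$ unknowns, and hence admits a nonzero solution $(\lambda_1,\dots,\lambda_M)$.

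Next I would split the index set according to the signs of the $\lambda_j$. Let $P \defeq \{\,j : \lambda_j > 0\,\}$ and $N \defeq \{\,j : \lambda_j < 0\,\}$, and put $E \defeq \{\,x_j : j \in P\,\}$ and $\overline{E} \defeq B \setminus E$. Because $(\lambda_j)$ is nonzero and sums to zero, both $P$ and $N$ are nonempty, so $E \djunion \overline{E} = B$ is a genuine partition into two nonempty parts. Setting $S \defeq \sum_{j\in P}\lambda_j = -\sum_{j\in N}\lambda_j > 0$, the relation $\sum_j \lambda_j x_j = \vec{0}$ rearranges to
\[
  p \defeq \sum_{j\in P}\frac{\lambda_j}{S}\,x_j \;=\; \sum_{j\in N}\frac{-\lambda_j}{S}\,x_j ,
\]
where each side is a convex combination (nonnegative weights summing to $1$). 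Hence $p \in \hull{E}\cap\hull{\overline{E}}$, so this intersection is nonempty, which is the main assertion.

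For the ``consequently'' clause I would observe that a common point precludes separation: a hyperplane with nonzero normal $w$ separating $E$ from $\overline{E}$ has two convex closed half‑spaces, one containing $\hull{E}$ and the other $\hull{\overline{E}}$; with $c \defeq \max_{a\in E}\inprod{a}{w}$ the separation gives $\inprod{b}{w}\ge c$ for all $b\in\overline{E}$, so evaluating the two convex‑combination expressions for $p$ forces $\inprod{p}{w}\le c$ and $\inprod{p}{w}\ge c$, i.e.\ $p$ lies on the hyperplane and every point used in the combinations has $\inprod{\cdot}{w}=c$. Thus $w$ yields zero separation margin, incompatible with the strict inequalities that an actual realization in the sense of~(\ref{eq:emst}) must satisfy.

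I do not expect a genuine obstacle here: essentially all the content is the linear‑algebra observation of the first step. The only points that need a little care are ensuring the partition has two \emph{nonempty} parts — which is exactly where nonzero-ness together with $\sum_j\lambda_j=0$ is used — and, in the last clause, keeping the weak‑versus‑strict separation distinction straight so that the conclusion is correctly stated as an impossibility for the strict constraints of~(\ref{eq:emst}).
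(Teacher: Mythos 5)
Your proof is correct. The paper cites Radon's theorem as a classical result and supplies no proof of its own, so there is nothing to compare against; what you give is the standard affine-dependence argument (a nontrivial solution of $\sum_j\lambda_j x_j=\vec{0}$, $\sum_j\lambda_j=0$, split by sign), and every step checks out: affine dependence follows from $d+1$ homogeneous equations in $M\geq d+2$ unknowns, both sign classes are nonempty because the $\lambda_j$ are not all zero yet sum to zero, and the rescaled sums exhibit a common point of the two hulls. Your handling of the ``consequently'' clause is also the right one, and in fact slightly sharpens the paper: under Definition~\ref{def:linsep} separation is only weak, so a common hull point does not literally forbid a separating hyperplane (all points could lie on it), but as you observe it forces zero margin, which is what rules out the strict inequalities of~(\ref{eq:emst}) and is how the theorem is actually used in Section~\ref{sec:lowerbound}.
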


In our context, Theorem~\ref{thm:radon} says ``for every map $B=g(V\times V)$ there exists a nonseparable partition of $B$''. However, we want to show that ``there exists a graph $G=(V,E)$ such that the two subsets $g(E)$ and $g(\overline(E))$ of $B$ are nonseparable for every $f$.'' This requires a change in the order of the quantifiers (the ``for every'' and ``there exists'') in the statement of Radon's theorem. Fortunately, it turns out that a random partition just works: it effectively lets us exchange the said quantifiers. Moreover, a uniformly random partition of $V\times V$  means $G$ is an \erdosrenyi random graph $G\sim \mathcal{G}(n,1/2)$. This notion is captured in Theorem~\ref{thm:nonseparable}, which is ``expensive'' than Radon's theorem:  the number of points needs to be at least $6d$ (roughly speaking) instead of just $d+2$. Additionally, and the claim holds true with high probability. 

\thmNonSeparable*

The main goal of this section is to present a proof. We prepare by developing two propositions.


\begin{restatable}[Nonseparability via $d$-supported hyperplanes]{proposition}{thmPropNoPressing}
\label{prop:no-pressing}
Let $n\geq 7$ and $M=\binom{n}{2}$ be two integers. Let $B$ be an arbitrary set of $M$ points in $\R{d}$ where $d \leq M/ 6$. Let $B=E \djunion \overline{E}$ be a uniformly random partition of $B$. Then, with probability $1-o(1)$, the convex hulls of $E$ and $\overline{E}$ cannot be separated by a hyperplane supported on any $d$ points of $B$.
\end{restatable}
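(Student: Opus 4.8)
The plan is a union bound over the (few) hyperplanes that could conceivably do the separating. First I would fix the reading that a hyperplane \emph{supported on $d$ points of $B$} is the affine span $h_S \defeq \mathrm{aff}(S)$ of an affinely independent $d$-subset $S \subseteq B$; there are at most $\binom{M}{d}$ of these. Throughout, $B$ (and hence each $S$ and each $h_S$) is fixed, and the only randomness is in the partition $E \djunion \overline{E}$.

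The deterministic core is the following observation about a single $h_S$. Let $P_S$ and $N_S$ be the two open half-spaces bounded by $h_S$, so that $B \setminus S$ splits as $(B \cap P_S) \djunion (B \cap N_S)$ with $\abs{B \cap P_S} + \abs{B \cap N_S} = M - d > 0$. If $h_S$ separates $E$ from $\overline{E}$ in the sense of Definition~\ref{def:linsep}, then $B \cap P_S$ and $B \cap N_S$ are each monochromatic and, when both are nonempty, of opposite colors: writing $h_S = \{x : \inprod{x}{w} = c\}$ in the orientation for which $\max_{a \in E}\inprod{a}{w} \le c \le \min_{b \in \overline{E}}\inprod{b}{w}$, every point of $S$ attains value $c$, any point of $B \setminus S$ with value $> c$ must lie in $\overline{E}$, and symmetrically with value $< c$. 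Allowing for the two orientations, the coloring of $B \setminus S$ is forced into one of exactly two patterns (the $d$ points of $S$ being unconstrained), so
\[
\Pr_{E \djunion \overline{E}}\bigl[\, h_S \text{ separates } E \text{ from } \overline{E} \,\bigr] \;\le\; 2 \cdot 2^{-(M-d)} \;=\; 2^{\,1-(M-d)} \, .
\]

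A union bound over the $\le \binom{M}{d}$ candidate hyperplanes then gives
\[
\Pr\bigl[\, \text{some $d$-supported hyperplane separates } E \text{ from } \overline{E} \,\bigr] \;\le\; \binom{M}{d}\, 2^{\,1-(M-d)} \, .
\]
Since $d \le M/6 \le M/2$, the standard estimate $\binom{M}{d} \le 2^{H(d/M)M} \le 2^{H(1/6)M}$ (as $H$ is increasing on $[0,1/2]$) applies, while $M - d \ge 5M/6$, so the bound is at most $2^{\,1 - (5/6 - H(1/6))M}$. A one-line computation gives $H(1/6) < 0.651 < 5/6$, so the exponent tends to $-\infty$ as $M = \binom{n}{2} \to \infty$, and $n \ge 7$ (hence $M \ge 21$) already forces the bound below $1$.

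I expect the only genuinely delicate point to be the geometric bookkeeping in the first step — in particular, pinning down ``supported on $d$ points'' so that the number of candidates really is $\binom{M}{d}$ rather than something larger for point sets in special position (reading it as ``affinely spanned by $d$ points'' achieves this; the fact that separability of the two hulls actually \emph{forces} the existence of such a hyperplane is a separate matter, handled by Proposition~\ref{prop:sep-imply-press}, and is not needed here), together with the harmless edge case in which one of $P_S, N_S$ is empty. The decisive inequality $H(1/6) < 5/6$ is exactly what makes the constant $6$ in $d \le M/6$ — equivalently $\abs{B} \ge 6d$ — the natural threshold; the rest is routine.
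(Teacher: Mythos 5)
Your argument is correct and is essentially the paper's own proof: both enumerate the at most $\binom{M}{d}$ candidate hyperplanes, observe that a fixed one separates at most $2\cdot 2^{d}$ of the $2^{M}$ colorings (only the $d$ supporting points being unconstrained), and union-bound, so your per-hyperplane probability $2\cdot 2^{-(M-d)}$ is exactly the paper's ratio $2^{d+1}/2^{M}$. The only cosmetic difference is how $\binom{M}{d}$ is estimated at the end (your entropy bound $2^{H(1/6)M}$ versus the paper's $(Me/d)^d$ with $M=\alpha d$), and the general-position caveat you flag --- extra points of $B\setminus S$ possibly lying on $h_S$ --- is present, and left equally implicit, in the paper's proof.
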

\begin{proof}

Let $H_B$ be the set of hyperplanes that pass through exactly $d$ points of $B$. This implies $\abs{H_B}=\binom{M}{d}$. Now consider a hyperplane $h\in H_B$ which separates the bipartition $E\djunion \overline{E}=B$. Fix $h$. We claim that the number of distinct binary labelings $B\Mapsto \{E, \overline{E}\}$ that $h$ can separate is $2\cdot 2^d=2^{d+1}$, as follows: two choices for the symmetry of $E$ and $\overline{E}$ with respect to $h$ (one gets the label $E$ and the other gets $\overline{E}$), and $2^d$ choices for the classification of the $d$ points supporting the hyperplane into $\{E,\overline{E}\}$. By a union bound over all hyperplanes, the number of distinct decorations of the points of $B$ that can be separated by \emph{some} hyperplane is at most $2^{d+1}\abs{H_B} = 2^{d+1}\binom{M}{d}$.

However, the total number of labelings $B\Mapsto \{E, \overline{E}\}$ is $2^M$. Let $p(M,d)$ be the probability that the two convex sets induced by a \emph{random} labeling $r^*$ is separated by \emph{some} hyperplane $h\in H_B$. That is,
\begin{align*}
p(M,d)&= \Pr_{\substack{r\sim \{0,1\}^M\\h\sim H_B}}[h \text{ separates } E \text{ from } \overline{E}]\\
&\leq \frac{2^{d+1}\binom{M}{d}}{2^M} 
\leq \frac{\left(\frac{Me}{d}\right)^d}{2^{M-d-1}} 
= \frac{(\alpha e)^d}{2^{d(\alpha-1)-1}}
\end{align*}
where $M=\binom{n}{2}=\alpha d$ for some $\alpha > 1$. This quantity will be at most $1/n$ for all $n\geq 7$ if we set $\alpha\geq 6$.

Therefore, the probability that no $d$-supported hyperplane $h\in H$ separates the random partition $E \cap \overline{E}$ is at least $1-1/n=1-o(1)$ when $n\geq 7$ and $d=O(n^2)$.

\end{proof}


\begin{restatable}[Separating hyperplanes imply pressing hyperplanes]{proposition}{thmPropSepImplyPress}
\label{prop:sep-imply-press}
Let $S$ be the affine subspace spanned by the points $B$. Let $d=\dim(S) \geq 2$. Let $B_0\djunion B_1=B$ be a partition of $B$ such that the convex hulls of $B_0$ and $B_1$ do not intersect. Then, there exists a hyperplane $h$  which separates $B_0$ and $B_1$ and moreover, it is supported on exactly $d$ points of $B$.
\end{restatable}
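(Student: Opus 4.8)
The plan is to prove Proposition~\ref{prop:sep-imply-press} by starting from an arbitrary separating hyperplane and ``pushing'' it toward the point-set until it becomes supported on as many points of $B$ as possible, then arguing that maximality forces exactly $d$ supporting points. Since the convex hulls of $B_0$ and $B_1$ are disjoint and compact (both are finite point-sets), there is a hyperplane $h_0$ strictly separating them. I would then translate and rotate $h_0$ within the affine hull $S$ to increase the number of points of $B$ lying on it, while maintaining separation. Concretely, first translate $h_0$ (keeping its normal direction fixed) toward $B_1$ until it first touches some point of $B$; this is possible precisely because $h_0$ strictly separates, so such a first-contact translate exists and still weakly separates $B_0$ from $B_1$. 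This gives a hyperplane supported on at least one point.

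Next I would iterate: given a separating hyperplane $h$ supported on a set $P \subseteq B$ of affinely independent points with $\lvert P \rvert = k < d$, I claim we can perturb $h$ to pick up an additional point. The space of hyperplanes in $S$ containing the affine span of $P$ is a pencil of dimension $d - k \geq 1$; rotating within this pencil (keeping $P$ fixed on the hyperplane) and preserving separation, we can again move until a new point of $B$ is hit. The key point to verify here is that rotation within the pencil, started from a hyperplane that weakly separates $B_0$ and $B_1$, can be carried out far enough to touch a new point without ever crossing a point to the wrong side — this follows by choosing the direction of rotation appropriately (there are two; at least one keeps every already-correctly-classified point on its side as we rotate infinitesimally, and we rotate until first contact). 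One must also ensure the newly acquired point is affinely independent from $P$: if it lay in the affine span of $P$ it would already have been forced onto every hyperplane of the pencil, contradicting that it was not previously on $h$. Repeating, after at most $d$ steps we obtain a hyperplane supported on exactly $d$ affinely independent points of $B$ that separates $B_0$ from $B_1$.

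The step I expect to be the main obstacle is the rotation argument in the inductive step: making precise that among the two rotational directions in the pencil, one can be chosen so that the weak-separation property is preserved throughout the rotation up to the moment of first contact with a new point. The subtlety is that points of $B$ currently lying strictly on one side could, under an ill-chosen rotation, be swept across the hyperplane before any new point is touched; the resolution is that ``first contact'' is defined as the first time \emph{any} point of $B \setminus P$ meets the hyperplane, and up to that instant no point has changed sides, so separation (in the weak sense of Definition~\ref{def:linsep}, with $\leq$ constraints) is maintained by continuity. A clean way to package all of this is to maximize $\lvert B \cap h \rvert$ over all weakly separating hyperplanes $h$ — a maximum exists since the count is integer-valued and bounded — and then show that if the maximizer had fewer than $d$ supporting points, the pencil/rotation move above would strictly increase the count, a contradiction. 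I would also note the hypothesis $d = \dim(S) \geq 2$ is used to guarantee the pencil has positive dimension at every stage before termination, and that ``hyperplane'' is understood relative to the ambient $\mathbb{R}^d$ via $S$ (if $S$ is lower-dimensional than the ambient space, extend the $d$-point-supported hyperplane of $S$ arbitrarily).
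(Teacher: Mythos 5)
Your proposal is correct, but it takes a genuinely different route from the paper. The paper argues algebraically: it writes the (weak) separation conditions as a linear feasibility system in the hyperplane's coefficient vector, shows that the feasibility polytope contains no line (using $\dim(\mathrm{aff}(B)) = d$ to produce a point $b$ with $\inprod{b}{h_y}\neq 0$, which makes the constraint for $b$ unbounded along the line), and then invokes the fact that a line-free nonempty polyhedron has a vertex, i.e.\ a feasible $h$ at which $d$ constraints are tight --- each tight constraint being a point of $B$ on the hyperplane. Your argument is the constructive geometric counterpart: translate to first contact, then repeatedly rotate within the pencil of hyperplanes about a $(d-2)$-flat through the current contact set until a new point is acquired; this is essentially walking to a vertex of the same polytope along its faces, so the two proofs are dual views of one fact, but yours avoids the LP normalization ($\inprod{b}{h}\ge 1$ versus $\le 1$, which tacitly assumes the separator misses the origin) at the cost of the sweep analysis. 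Your resolution of the main subtlety is sound: parametrizing the pencil by an angle $\theta$, each point $p\notin$ axis satisfies $g_\theta(p)=r_p\sin(\phi_p-\theta)$ for its projected polar angle $\phi_p$, so a correctly classified point must pass through the hyperplane (i.e.\ register as a ``contact'') before its sign flips, and stopping at the first contact preserves weak separation. The one detail you should make explicit is that the rotation axis must contain \emph{all} of $h\cap B$, not merely the affinely independent subset $P$ you are tracking: a point of $B$ lying on $h$ but off the axis leaves the hyperplane the instant you rotate and can land on the wrong side. This is repaired for free by taking $P$ to be a maximal affinely independent subset of $h\cap B$ (so that $h\cap B\subseteq \mathrm{aff}(P)\subseteq$ axis), or equivalently by your closing ``maximize $\lvert B\cap h\rvert$ over weakly separating $h$'' packaging, under which $\lvert B\cap h\rvert<d$ forces $\dim\mathrm{aff}(B\cap h)\le d-2$ and the rotation strictly increases the count. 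Finally, note that both your proof and the paper's actually deliver \emph{at least} $d$ supporting points rather than ``exactly'' $d$; this looseness is inherited from the statement and is harmless for the counting application in Proposition~\ref{prop:no-pressing}.
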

\begin{proof}

Let $B\subset \R{d}$. Let $\inprod{x}{y}=y^Tx=\sum_{i}{x_iy_i}$ for every $x,y\in \R{d}$.

Since the convex hulls of $B_0$ and $B_1$ do not intersect, the separating hyperplane theorem implies that there exists a hyperplane $h$ such that 
\begin{align*}
\inprod{b}{h} &\geq 1 \text{ for all } b\in B_0\\
\inprod{c}{h} &\leq 1 \text{ for all } c\in B_1
\end{align*} Let $\mathcal{L}$ be the above feasible linear system. We make the following claim.
\begin{claim*}
The feasibility polytope $P$ of the above linear system does not contain an affine linear subspace of dimension $1$. 
\end{claim*}
If the claim is true, $P$ will have a vertex $h^*$ that meets $d$ constraints, each a $d-1$ dimensional facet of $P$. This vertex does in fact corresponds to a separating hyperplane that satisfies $d$ linear constraints of $\mathcal{L}$ with equality. Since each constraint is given by one point of $B$, this implies $h^*$ is supported by $d$ points in $B$.

It remains to prove the claim. For the sake of contradiction, assume that $P$ contains an affine subspace $H=\{h\}$ of dimension $1$ defined by the equation $h=h_x+\lambda h_y$ for some $h_x, h_y\in P$ and all $\lambda \in \R{}$. 

Suppose there exists a point $b\in B$ that is not orthogonal to the (separating) hyperplane $h_y$ i.e., $\inprod{b}{h_y}\neq 0$. Such a point $b$ will always exist because otherwise, all points of $B$ would lie on the same line (normal to $h_y$) and $d$ would be one, violating the condition that $d=\dim(\text{span}(B))\geq 2$. Without loss of generality, assume that $b \in B_0$.

Since $h\in H\subseteq P$, it implies that for all $\lambda\in \R{}$,
\begin{align*}
\inprod{b}{x+\lambda y} &\geq 1 \text{ for all } b\in B_0\\
\inprod{c}{x+\lambda y} &\leq 1 \text{ for all } c\in B_1
\end{align*}
Thus we can freely choose $\lambda_1, \lambda_2\in \R{}$ and write $h_1=h_x+\lambda_1 h_y, h_2=h_x+\lambda_2 h_y$ such that $h_1, h_2\in H\subseteq P$ and $\inprod{b}{h_1} \leq 1 \leq \inprod{b}{h_2}$. Intuitively speaking, we have translated a separating hyperplane $h_1$ to a new separating hyperplane $h_2$ along the direction $h_y$. However, there is now a point $b\in B$ which ``satisfies'' only one of the hyperplanes $h_1,h_2$ but not both. This is a contradiction, since both $h_1,h_2 \in H \subseteq P$ are two feasible solutions of $\mathcal{L}$. Therefore, the claim must be true.

\end{proof}

\paragraph{Proof of Theorem~\ref{thm:nonseparable}.} 
\begin{proof}{\textbf{(of Theorem~\ref{thm:nonseparable})}}
Apply Proposition~\ref{prop:no-pressing} along with the contrapositive of Proposition~\ref{prop:sep-imply-press}. Together, they imply that if $d$ is at most $M/6$, a uniformly random partition $E\djunion \bar{E}=B\subset \{x,y\}^d$ is nonseparable with probability $1-o(1)$ in the randomness of the partition.
\end{proof}

While Theorem~\ref{thm:nonseparable} applies to random graphs, it is possible to modify Proposition~\ref{prop:no-pressing} so that a similar statement applies to random graphs with $(n-1)$ edges.

\thmLowerboundTree*

\begin{proof}
One can make an argument similar to that in the proof of Corollary~\ref{thm:lowerbound}. The only place to change would be the proof of Proposition~\ref{prop:no-pressing}. Let $n_t$ be the number of colorings (trees) that are separable (realizable) by some $d$-supported hyperplane.  Although we do not know an exact estimate on $n_t$, it is certainly smaller than the number of all colorings separable by some hyperplane passing through $d$ points. From the proof of Proposition~\ref{prop:no-pressing}, we know that this number is $\binom{M}{d}2^{d+1}$. Hence
\[
n_t
\leq \binom{M}{d}2^{d+1}
\leq \left(\frac{Me}{d}\right)^d 2^{d+1}
\leq \left(\frac{(n^2/2)e}{d}\right)^d 2^{d+1}
= n^{2d}(e/2d)^d 2^{d+1}
\]
since $M=\binom{n}{2}\leq n^2/2$. By Cayley's formula, the number of labeled trees on $n$ vertices is $n^{n-2}$. Thus the probability $p$ that a coloring, chosen uniformly at random from the colorings corresponding to random spanning trees, is  
\[
p=\frac{n_t}{n^{n-2}} \leq \frac{(e/2d)^d 2^{d+1}}{n^{n-2-2d}}.\]
$p$ will be less than $1/n$ if 
\begin{align*}
&\frac{(e/2d)^d 2^{d+1}}{n^{n-2-2d}}<\frac{1}{n}\\
\implies & (e/2d)^d 2^{d+1}<n^{n-3-2d}\\
\implies & d\log (e/2d)+(d+1)\log 2 < (n-3-2d)\log n\\
\implies & d - d\log(2d)+(d+1)\log 2 < (n-3-2d)\log n \\
\implies & d +\log 2 + (2d+3-n)\log n < d\log(d).
\end{align*}	
By setting $d\leq n/2$, the left hand side is at most $n/2 +\log 2 + 3\log n$, which is strictly smaller than the right hand side $(n/2)\log (n/2)$ when $n\geq 17$.

Therefore, with probability $1-o(1)$ there exist a random tree on $n$ vertices which is not realizable by any real weights and any mapping $V\Mapsto \{x,y\}^d$ when $d\leq n/2$ and $n\geq 17$.

\end{proof}

\end{document}